\newtheorem{theorem}{Theorem}
\newtheorem{definition}{Definition}
\newtheorem{fact}{Fact}
\newtheorem{prop}{Proposition}
\newtheorem{remark}{Remark}
\newtheorem{example}{Example}
\newtheorem{problem}{Problem}
\newtheorem{cor}{Corollary}
\newcommand{\norm}[1]{\left\lVert{#1}\right\rVert}
\newcommand{\abs}[1]{\left\lvert{#1}\right\rvert}
\newcommand{\pmat}[1]{\begin{pmatrix}#1\end{pmatrix}}
\renewcommand{\geq}{\geqslant}
\renewcommand{\leq}{\leqslant}
\newcommand{\R}{\mathds{R}}
\renewcommand{\P}{\mathcal{P}}
\newcommand{\T}{\mathrm{T}}
\newcommand{\Ntsigma}{\mathrm{N}}
\newcommand{\Ss}{\mathcal{S}}
\newcommand{\Tsig}{\mathrm{T}}
\newcommand{\Ssi}{\Ss_{u}}
\newcommand{\Sss}{\Ss_{g}}
\title[]{On asymptotic characterization of destabilizing switching signals for switched linear systems}
\author{Atreyee Kundu}
\address{Department of Electrical Engineering, Indian Institute of Science Bangalore, Bengaluru - 560012, India. }
\email{atreyeek@iisc.ac.in}
\keywords{}
\date{\today}
\begin{document}



\begin{abstract}
    This paper deals with classes of (de)stabilizing switching signals for switched systems. Most of the available conditions for stability of switched systems are sufficient in nature, and consequently, their violation does not conclude instability of a switched system. The study of instability is, however, important for obvious reasons.  Our contributions are twofold: Firstly, we propose a class of switching signals under which a continuous-time switched linear system is unstable. Our characterization of instability depends solely on the asymptotic behaviour of frequency of switching, frequency of transition between subsystems, and fraction of activation of subsystems. Secondly, we show that our class of destabilizing switching signals is a strict subset of the class of switching signals that does not satisfy asymptotic characterization of stability recently proposed in the literature. This observation identifies a gap between asymptotic characterizations of stabilizing and destabilizing switching signals for switched linear systems. The main apparatus for our analysis is multiple Lyapunov-like functions.
\end{abstract}

%
%
  \maketitle
\section{Introduction}
\label{s:intro}
	A \emph{switched system} has two ingredients --- a family of systems and a switching signal. The \emph{switching signal} selects an \emph{active subsystem} at every instant of time, i.e., the system from the family that is currently being followed \cite[\S 1.1.2]{Liberzon}. Switched systems find wide applications in power systems and power electronics, automotive control, aircraft and air traffic control, network and congestion control, etc. \cite[p.\ 5]{Sun}. In this paper we will focus on continuous-time switched linear systems.
\subsection{Motivation}
\label{s:motive}
	It is well-known that qualitative properties of a switched system depends not only on the properties of its individual subsystems, but also on the properties of switching signals. In particular, divergent trajectories may be generated by switching appropriately among stable subsystems, while a suitably constrained switching signal may ensure stability of a switched system even if all subsystems are unstable (see e.g., \cite[p.\ 19]{Liberzon} for examples with two subsystems). Due to these interesting features, the problem of characterizing classes of switching signals that preserve stability of a switched system, has attracted a considerable research attention in the past few decades, see \cite{Antsaklis_survey, Heemels_survey, Liberzon_survey} for detailed surveys. We will restrict ourselves to purely time-dependent switching signals characterized by employing multiple Lyapunov-like functions \cite{Branicky}.
	
	Research on this topic can be broadly classified into two directions: stability characterization based on \emph{point-wise} properties of switching signals \cite{Morse'96,Hespanha'99, Liberzon'12, abc} and stability characterization based on \emph{asymptotic} properties of switching signals \cite{def,pqr}. In case of the former, stabilizing switching signals obey certain upper bounds on the number of switches and duration of activation of unstable subsystems on \emph{every} interval of time, while in case of the latter, stability is characterized based solely on the asymptotic properties of the switching signals. Recently in \cite{pqr} it was shown that if a switching signal satisfies any of the existing point-wise or asymptotic characterizations of stability, then it satisfies certain conditions on the asymptotic behaviour of switching frequency, frequency of transitions between subsystems, and fraction of activation of subsystems.
	
	The above body of results is derived by employing multiple Lyapunov-like functions, and is only sufficient in nature. Consequently, if a switching signal does not obey these stability conditions, we cannot conclude that the resulting switched system is unstable. This fact motivates the current work. We are interested in instability characterizations of switched systems using multiple Lyapunov-like functions.
	
	Instability is an important concept in stability theory. This concept is often useful in studying behaviour of a switched system under failures in system components, adversarial attacks, etc. It is known that if a family of asymptotically stable systems does not admit a common Lyapunov function, then the family admits at least one switching signal that is destabilizing. In \cite{Shorten'00} a sufficient condition for existence of a stabilizing switching signal was proposed using matrix pencils. Several classes of subsystems that admit destabilizing switching signals were identified. This set of results was employed to study the connection between existence of a destabilizing switching signal and non-existence of a common quadratic Lyapunov function in \cite{King'06}. In \cite{Wulff'11} a necessary but not sufficient condition for instability of a planar switched linear system was proposed by employing flow relations of the constituent subsystems and construction of invariant sets. Instability of stochastic switched systems under arbitrary switching was addressed in \cite{Zhang'15}. The authors proposed sufficient conditions for instability in a probabilistic sense.
	
	In this paper we focus on the problem of characterizing switching signals that are destabilizing.  Identification of such switching signals are useful for obvious reasons. For example, from the viewpoint of resilience of a system, it is often necessary to know which switching signals do not preserve good qualitative behaviour of the system so that adequate measures for preventing them may be taken. 
	
	We will study the following problem: given a family of continuous-time linear systems containing both asymptotically stable and unstable subsystems, and a set of admissible transitions, characterize a class of switching signals under which the resulting switched system is unstable. Since we allow the presence of unstable subsystems, constant switching signals on these subsystems are destabilizing, and constitute a trivial solution to our problem. However, our objective is to characterize a (possibly large) class of switching signals beyond the class of constant signals on unstable subsystems.

\subsection{Our contributions}
\label{s:contri}
	Multiple Lyapunov-like functions are a widely used tool for studying stability of switched systems \cite[Chapter 3]{Liberzon}. The underlying idea is that the maximum increase in these functions caused by activation of unstable subsystems and occurrence of switches is compensated by the minimum decrease caused by activation of asymptotically stable subsystems. We will utilize minimum increase and maximum decrease of these functions to characterize instability of a switched system. Our characterization of destabilizing switching signals involves asymptotic behaviour of the following properties of these signals: frequency of switching, frequency of transition between subsystems, and fraction of activation of subsystems. It does not involve nor imply conditions on a switching signal on \emph{every} interval of time.
	
	Earlier in \cite{pqr} asymptotic properties of switching signals were used to characterize stability of a switched system. Our class of destabilizing switching signals is a strict subset of the class of switching signals that does not obey the stability condition proposed in \cite[Theorem 5]{pqr}. (De)stabilizing properties of the class of switching signals that satisfies neither the stability condition of \cite{pqr} nor the instability condition proposed in this paper, remains undetermined.
	
	To summarize, our contributions in this paper are the following:
	\begin{itemize}[label = \(\circ\),leftmargin = *]
		\item Given a family of systems containing both asymptotically stable and unstable subsystems, and a set of admissible transitions, we characterize a class of switching signals under which the resulting switched system is unstable. Our characterization of instability is based solely on the asymptotic properties of a switching signal.
		\item We study relations between asymptotic characterization of stabilizing and destabilizing switching signals for switched systems.
	\end{itemize}
	To the best of our knowledge, this is the first instance in the literature when multiple Lyapunov-like functions are employed to characterize instability of a switched system, and the gap between asymptotic characterization of stabilizing and destabilizing switching signals, is addressed.
	
	The remainder of this paper is organized as follows: in \S\ref{s:prelims} we formulate the problem under consideration, and catalog required definitions and notations. We also recall the stability condition from \cite{pqr} in this section. Our main results appear in \S\ref{s:mainres}. We elaborate on our results with discussions and numerical examples. We conclude in \S\ref{s:concln} with a brief discussion of future research directions. Proofs for the auxiliary and main results are presented in a consolidated fashion in \S\ref{s:all_proofs1} and \S\ref{s:all_proofs2}, respectively.
	
	{\bf Notation}. \(\R\) is the set of real numbers and \(\norm{\cdot}\) is the standard \(2\)-norm. We let \(\mathrm{1}(A)\) denote the indicator function of a set \(A\) and \(^\top\) denote the transpose operation. For a square matrix \(M\), we denote by \(\lambda_{\min}(M)\) and \(\lambda_{\max}(M)\) the minimum and maximum eigenvalues of \(M\), respectively. \(A^{c}\) denotes the complement set of a set \(A\).
\section{Preliminaries}
\label{s:prelims}
\subsection{Switched linear systems}	
\label{s:sw_lin_sys}
	We consider a family of continuous-time linear systems
	\begin{align}
	\label{e:family}
		\dot{x}(t) = A_{p}x(t),\:\:x(0) = x_{0},\:\:p\in\P,\:\:t\in[0,+\infty[,
	\end{align}
	where \(x(t)\in\R^{d}\) is the vector of states at time \(t\), and \(\P = \{1,2,\ldots,N\}\) is an index set. We assume that for each \(p\in\P\) the matrix \(A_{p}\in\R^{d\times d}\) has full rank; consequently, \(0\in\R^{d}\) is the unique equilibrium point for each system in \eqref{e:family}. Let \(\sigma:[0,+\infty[\to\P\) be a switching signal; it is a piecewise constant function that specifies, at each time \(t\), the subsystem \(A_{\sigma(t)}\), that is active at \(t\). By convention, \(\sigma\) is assumed to be continuous from right and having limits from the left everywhere. A switched system generated by the family of systems \eqref{e:family} and a switching signal \(\sigma\) is given by
	\begin{align}
	\label{e:swsys}
		\dot{x}(t) = A_{\sigma(t)}x(t),\:\:x(0) = x_{0},\:\:t\in[0,+\infty[.
	\end{align}
	
	Let \(\P_{S}\) and \(\P_{U}\) denote the sets of indices of asymptotically stable and unstable subsystems, respectively, \(\P = \P_{S}\sqcup\P_{U}\), and \(E(\P)\) denote the set of all ordered pairs \((p,q)\) such that a transition from subsystem \(p\) to subsystem \(q\) is admissible, \(p,q\in\P\).\footnote{By asymptotically stable subsystems, we mean that the matrices \(A_{p}\)'s are Hurwitz, and for the unstable subsystems, \(A_{p}\)'s are not Hurwitz.} We let \(0 =: \tau_{0}<\tau_{1}<\tau_{2}<\cdots\) be the switching instants; these are the points in time where \(\sigma\) jumps. We call a switching signal \(\sigma\) admissible if it satisfies \((\sigma(\tau_{i}),\sigma(\tau_{i+1}))\in E(\P)\), \(i=0,1,\ldots\). Let \(\mathcal{S}\) denote the set of all admissible switching signals. For \(t > 0\), let \(\Ntsigma(t)\) denote the number of switches on \(]0,t]\). The solution \((x(t))_{t\geq 0}\) to the switched system \eqref{e:swsys} corresponding to an admissible switching signal \(\sigma\in\mathcal{S}\) is the map \(x:[0,+\infty[\to\R^{d}\) defined by
	\begin{align*}
		x(t) = e^{A_{\sigma(\tau_{\Ntsigma(t)})}(t-\tau_{\Ntsigma(t)})}\cdots e^{A_{\sigma(\tau_{0})}(\tau_{1}-\tau_{0})}x_{0},
	\end{align*}
	where the dependence of \(x\) on \(\sigma\) is suppressed for notational simplicity.
	\begin{definition}
	\label{d:gas}
	\rm{
		The switched system \eqref{e:swsys} is \emph{globally asymptotically stable} (GAS) for a given switching signal \(\sigma\) if \eqref{e:swsys} is
		\begin{itemize}[label = \(\circ\), leftmargin = *]
			\item Lyapunov stable, and
			\item globally asymptotically convergent, i.e., for all \(x(0)\), \(\norm{x(t)}\to 0\) as \(t\to+\infty\).
		\end{itemize}
	}
	\end{definition}
	\begin{definition}
	\label{d:uns}
	\rm{
		The switched system \eqref{e:swsys} is \emph{unstable} for a given switching signal \(\sigma\) if for all \(x(0)\neq 0\), \(\norm{x(t)}\to+\infty\) as \(t\to+\infty\).
	}
	\end{definition}
	
\subsection{Stabilizing switching signals}
\label{ss:stab_sw}
	Given a family of systems \eqref{e:family} and a set of admissible transitions \(E(\P)\), characterizing a class of switching signals under which the resulting switched system \eqref{e:swsys} is GAS, is studied widely in the literature \cite[Chapter 3]{Liberzon}. A useful tool for this study is multiple Lyapunov-like functions \cite{Branicky}. Below we recall these functions and their properties.
	\begin{fact}{\cite[Fact 2.3]{arxiv'13}}
		\label{fact:key1}
		\rm{
			For each \(p\in\P\) there exists a pair \((P_{p},\hat{\lambda}_{p})\), where \(P_{p}\in\R^{d\times d}\) is a symmetric and positive definite matrix, and
			\begin{itemize}[label = \(\circ\), leftmargin = *]
				\item if \(A_{p}\) is asymptotically stable, then \(\hat{\lambda}_{p} > 0\),
				\item if \(A_{p}\) is unstable, then \(\hat{\lambda}_{p} \leq 0\),
			\end{itemize}
			such that, with
			\begin{align}
			\label{e:Lyap_func-defn}
				\R^{d}\ni\xi\mapsto V_{p}(\xi) := <P_{p}\xi,\xi>\in[0,+\infty[,
			\end{align}
			we have for all \(\gamma_{p}(0)\in\R^{d}\), \(t\in[0,+\infty[\),
			\begin{align}
			\label{e:key_ineq1}
				V_{p}(\gamma_{p}(t))\leq \exp(-\hat{\lambda}_{p}t)V_{p}(\gamma_{p}(0)),
			\end{align}
			and \(\gamma_{p}(\cdot)\) solves the \(p\)-th system dynamics in \eqref{e:family}, \(p\in\P\).
		}
		\end{fact}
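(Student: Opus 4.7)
The plan is to produce, for each $p \in \P$, the pair $(P_p, \hat\lambda_p)$ by splitting on whether $A_p$ is Hurwitz or not, deriving in each case a scalar differential inequality of the form $\frac{d}{dt} V_p(\gamma_p(t)) \leq -\hat\lambda_p V_p(\gamma_p(t))$, and then reading off \eqref{e:key_ineq1} via Gronwall's lemma. The quadratic form $V_p$ itself will come from an appropriately chosen symmetric positive definite $P_p$.

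For $p \in \P_S$ (so $A_p$ is Hurwitz), I would invoke the classical Lyapunov theorem for linear systems: fix any symmetric positive definite $Q_p$; then the Lyapunov equation $A_p^\top P_p + P_p A_p = -Q_p$ admits a unique symmetric positive definite solution $P_p$. Differentiating $V_p(\gamma_p(t)) = \inprod{P_p \gamma_p(t)}{\gamma_p(t)}$ along the trajectory of $\dot{x} = A_p x$ gives
\[
    \frac{d}{dt} V_p(\gamma_p(t)) = -\gamma_p(t)^\top Q_p \gamma_p(t) \leq -\lambda_{\min}(Q_p) \norm{\gamma_p(t)}^2 \leq -\frac{\lambda_{\min}(Q_p)}{\lambda_{\max}(P_p)} V_p(\gamma_p(t)).
\]
Setting $\hat\lambda_p \Let \lambda_{\min}(Q_p)/\lambda_{\max}(P_p) > 0$ and applying Gronwall's inequality then yields \eqref{e:key_ineq1}.

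For $p \in \P_U$ (so $A_p$ is not Hurwitz), I would pick any symmetric positive definite $P_p$, for instance $P_p = I$, set $\mu_p \Let \lambda_{\max}(A_p^\top P_p + P_p A_p)$, and differentiate $V_p$ to obtain $\frac{d}{dt} V_p(\gamma_p(t)) \leq \mu_p \norm{\gamma_p(t)}^2 \leq \bigl(\mu_p/\lambda_{\min}(P_p)\bigr) V_p(\gamma_p(t))$. Gronwall's lemma then supplies $V_p(\gamma_p(t)) \leq \exp\bigl((\mu_p/\lambda_{\min}(P_p))\, t\bigr) V_p(\gamma_p(0))$, which matches \eqref{e:key_ineq1} with the choice $\hat\lambda_p \Let -\mu_p/\lambda_{\min}(P_p)$.

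The one delicate point---and essentially the only obstacle---is verifying the sign constraint $\hat\lambda_p \leq 0$ in the unstable case, that is, $\mu_p \geq 0$. I would establish this via the contrapositive of Lyapunov's theorem: if $\mu_p < 0$ then $A_p^\top P_p + P_p A_p$ would be negative definite for the symmetric positive definite $P_p$ just selected, which in turn would force $A_p$ to be Hurwitz and contradict $p \in \P_U$. Hence $\mu_p \geq 0$, giving $\hat\lambda_p \leq 0$ and closing the argument.
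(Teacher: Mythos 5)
Your argument is correct. A point of context first: the paper does not actually prove Fact~\ref{fact:key1} --- it imports it from \cite{arxiv'13} --- but it does prove the mirror statement, Fact~\ref{fact:key3}, by the technique that the cited reference also uses, so that is the natural benchmark. For \(p\in\P_{S}\) your route coincides with it (and with the footnote in Example~\ref{ex:counter_ex1}): solve \(A_{p}^\top P_{p}+P_{p}A_{p}=-Q_{p}\), bound the quadratic forms by extreme eigenvalues, and integrate; the only difference from the Fact~\ref{fact:key3} proof is the direction of the eigenvalue ratio, \(\lambda_{\min}(Q_{p})/\lambda_{\max}(P_{p})\) in place of \(\lambda_{\max}(Q_{p})/\lambda_{\min}(P_{p})\), as it must be. For \(p\in\P_{U}\) the routes genuinely diverge. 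The paper's template shifts the matrix: choose \(\varepsilon_{p}>\mathrm{Re}(\lambda_{\max}(A_{p}))\) so that \(A_{p}-\varepsilon_{p}I\) is Hurwitz, solve the Lyapunov equation for the shifted matrix, and absorb the shift into the decay rate, which produces a \(P_{p}\) adapted to the dynamics and a rate of the form \(2\varepsilon_{p}\) corrected by an eigenvalue ratio. You instead take an arbitrary symmetric positive definite \(P_{p}\), bound \(\frac{d}{dt}V_{p}\) by \(\mu_{p}\norm{\gamma_{p}(t)}^{2}\) with \(\mu_{p}=\lambda_{\max}(A_{p}^\top P_{p}+P_{p}A_{p})\), and certify \(\mu_{p}\geq 0\) via the converse direction of Lyapunov's theorem. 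This is more elementary and requires no spectral information about \(A_{p}\), at the price of a typically cruder growth estimate (with \(P_{p}=I\) one gets \(-\hat{\lambda}_{p}=\lambda_{\max}(A_{p}+A_{p}^\top)\), which can far exceed the actual spectral abscissa); since Fact~\ref{fact:key1} is purely an existence statement, either construction suffices, though the paper's choice matters downstream when the scalars enter quantitative conditions such as \eqref{e:stab_condn}. One small ordering issue: the step \(\mu_{p}\norm{\gamma_{p}(t)}^{2}\leq\bigl(\mu_{p}/\lambda_{\min}(P_{p})\bigr)V_{p}(\gamma_{p}(t))\) already uses \(\mu_{p}\geq 0\) (the inequality reverses otherwise), so the sign argument you defer to the end should be invoked before that bound, not only as a check on the sign of \(\hat{\lambda}_{p}\); this is cosmetic, since you do prove \(\mu_{p}\geq 0\).
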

		The functions \(V_{p}\), \(p\in\P\) are called Lyapunov-like functions. The scalar \(\hat{\lambda}_{p}\) gives a quantitative measure of (in)stability of the \(p\)-th subsystem, \(p\in\P\). Indeed, the inequality \eqref{e:key_ineq1} captures the minimum rate of decay and maximum rate of growth of \(V_{p}\) for the asymptotically stable and unstable subsystems, respectively.
		
		By definition of the Lyapunov-like functions \(V_{p}\), \(p\in\P\) in \eqref{e:Lyap_func-defn}, they are linearly comparable. The following fact gives a measure of the maximum increase of the Lyapunov-like functions caused by a transition from subsystem \(p\) to subsystem \(q\).
		\begin{fact}{\cite[\S2.1]{arxiv'13}}
		\label{fact:key2}
		\rm{
			For each \((p,q)\in E(\P)\), the respective Lyapunov-like functions are related as follows: there exists \(\hat{\mu}_{pq} \geq 1\) such that
			\begin{align}
			\label{e:key_ineq2}
			 	V_{q}(\xi) \leq 	\hat{\mu}_{pq}V_{p}(\xi)\:\:\text{for all}\:\xi\in\R^{d}.
			\end{align}
		}
		\end{fact}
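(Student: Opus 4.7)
The plan is to exploit that both Lyapunov-like functions are quadratic forms induced by symmetric positive definite matrices, so the desired inequality reduces to comparing two such quadratic forms via their extreme eigenvalues.

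First I would recall from \eqref{e:Lyap_func-defn} that for every \(\xi\in\R^{d}\),
\[
\lambda_{\min}(P_{p})\,\norm{\xi}^{2} \;\le\; V_{p}(\xi) \;=\; \inprod{P_{p}\xi}{\xi} \;\le\; \lambda_{\max}(P_{p})\,\norm{\xi}^{2},
\]
and analogously for \(V_{q}\). Since each \(P_{r}\) is symmetric and positive definite, \(\lambda_{\min}(P_{r})>0\) and \(\lambda_{\max}(P_{r})>0\) for every \(r\in\P\), so these two-sided bounds are meaningful.

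Next I would chain the bounds: for any \(\xi\in\R^{d}\),
\[
V_{q}(\xi) \;\le\; \lambda_{\max}(P_{q})\,\norm{\xi}^{2} \;\le\; \frac{\lambda_{\max}(P_{q})}{\lambda_{\min}(P_{p})}\,V_{p}(\xi).
\]
Defining
\[
\hat{\mu}_{pq} \Let \max\!\left\{1,\;\frac{\lambda_{\max}(P_{q})}{\lambda_{\min}(P_{p})}\right\}
\]
gives a constant which is, by construction, at least \(1\) and for which \eqref{e:key_ineq2} holds. Taking the maximum with \(1\) is necessary only to enforce the normalization \(\hat{\mu}_{pq}\ge 1\); the bound itself already follows from the eigenvalue estimate. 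Since the construction depends only on the pair \((p,q)\) and the fixed matrices \(P_{p},P_{q}\), one can simply do this for every \((p,q)\in E(\P)\), yielding the finite collection of constants \(\{\hat{\mu}_{pq}\}_{(p,q)\in E(\P)}\).

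There is no substantial obstacle here: the statement is essentially a triviality about equivalent norms on \(\R^{d}\), and the role of Fact~\ref{fact:key2} is not to prove anything deep but to fix notation for the jump constants that will later appear in the multiple Lyapunov-like function analysis. The only mild point worth noting is that admissibility of the transition, i.e.\ \((p,q)\in E(\P)\), plays no role in the estimate itself; it only restricts the set of pairs for which the constants \(\hat{\mu}_{pq}\) actually need to be recorded.
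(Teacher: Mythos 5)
Your argument is correct: the two-sided eigenvalue bounds for a quadratic form with a symmetric positive definite matrix immediately give \(V_{q}(\xi)\le\lambda_{\max}(P_{q})\norm{\xi}^{2}\le\frac{\lambda_{\max}(P_{q})}{\lambda_{\min}(P_{p})}V_{p}(\xi)\), and taking the maximum with \(1\) legitimately enforces \(\hat{\mu}_{pq}\ge 1\) since enlarging the constant only weakens the inequality \eqref{e:key_ineq2}. The route is, however, different from the one the paper relies on. The paper treats Fact~\ref{fact:key2} as quoted from \cite[\S2.1]{arxiv'13} and emphasizes the \emph{tight} constant \(\hat{\mu}_{pq}=\lambda_{\max}(P_{q}P_{p}^{-1})\); the in-paper analogue of the argument is the proof of Proposition~\ref{prop:mu_est}, which computes \(\sup_{\xi\neq 0}V_{q}(\xi)/V_{p}(\xi)\) exactly via the change of variable \(y=P_{p}^{1/2}\xi\) and the Rayleigh quotient of the symmetric matrix \(P_{p}^{-1/2}P_{q}P_{p}^{-1/2}\). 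Your constant satisfies \(\lambda_{\max}(P_{q}P_{p}^{-1})\le\frac{\lambda_{\max}(P_{q})}{\lambda_{\min}(P_{p})}\), so it is valid but generally strictly looser. What the elementary chaining buys is brevity and independence from any similarity or square-root machinery; what the paper's route buys is tightness, which is not cosmetic here: \(\ln\hat{\mu}_{pq}\) enters the stability condition \eqref{e:stab_condn} directly, so a looser \(\hat{\mu}_{pq}\) shrinks the certified class \(\Sss\) of stabilizing switching signals and correspondingly distorts the comparison with \(\Ssi\) in \S\ref{ss:gap}. Your closing observation that admissibility of \((p,q)\) plays no role in the estimate is accurate; it only determines which constants need to be recorded.
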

		In \cite[Proposition 2.4]{arxiv'13} a \emph{tight} estimate of \(\hat{\mu}_{pq}\), \((p,q)\in E(\P)\) was given as \(\lambda_{\max}(P_{q}P_{p}^{-1})\).
		
		Let us now introduce some notations related to a switching signal.
		\begin{itemize}[label = \(\circ\), leftmargin = *]
			\item Fix \(t > 0\). Let
				\begin{align}
				\label{e:sw_freq}
					\nu(t) := \frac{\Ntsigma(t)}{t}
				\end{align}
			be the \emph{frequency of switching} at \(t\).
			\item We let \(\Ntsigma_{pq}(t)\) denote the number of times a switch from subsystem \(p\) to subsystem \(q\) has occurred before (and including) time \(t\). It follows that
			\(\displaystyle{\Ntsigma(t) = \sum_{(p,q)\in E(\P)}\Ntsigma_{pq}(t)}\). Let
			\begin{align}
			\label{e:trans_freq}
				\rho_{pq}(t) := \frac{\Ntsigma_{pq}(t)}{\Ntsigma(t)}
			\end{align}
			be the \emph{transition frequency from subsystem \(p\) to subsystem \(q\)} on \(]0,t]\), \((p,q)\in E(\P)\).
			\item We let \(\Tsig_{p}(t)\) denote the total duration of activation of subsystem \(p\) on \(]0,t]\), \(p\in\P\). Let
			\begin{align}
			\label{e:frac_activ}
				\eta_{p}(t) := \frac{\Tsig_{p}(t)}{t}
			\end{align}
			denote the \emph{fraction of activation of subsystem \(p\)} on the interval \(]0,t]\), \(p\in\P\).
		\end{itemize}

		A class of switching signals that ensures GAS of the switched linear system \eqref{e:swsys} was proposed in \cite{pqr}.
		
		\begin{theorem}{\cite[Theorem 5]{pqr}}
		\label{t:stab_thm}
		\rm{
		Consider a family of systems \eqref{e:family}. The switched system \eqref{e:swsys} is globally asymptotically stable (GAS) for every switching signal \(\sigma\in\Ss\) that satisfies
		\begin{align}
		\label{e:stab_condn}
			\varlimsup_{t\to+\infty}\Biggl(\nu(t)\sum_{(p,q)\in E(\P)}(\ln\hat{\mu}_{pq})\rho_{pq}(t) &- \sum_{p\in\P_{S}}\abs{\hat{\lambda}_{p}}\eta_{p}(t)
			+ \sum_{p\in\P_{U}}\abs{\hat{\lambda}_{p}}\eta_{p}(t) \Biggr) < 0,
		\end{align}
		where \(\hat{\lambda}_{p}\), \(p\in\P\) and \(\hat{\mu}_{pq}\), \((p,q)\in E(\P)\) are as in Facts \ref{fact:key1} and \ref{fact:key2}, respectively, and \(\nu(t)\), \(\rho_{pq}(t)\), \((p,q)\in E(\P)\) and \(\eta_{p}(t)\), \(p\in\P\) are as defined in \eqref{e:sw_freq}, \eqref{e:trans_freq} and \eqref{e:frac_activ}, respectively.
		}
	\end{theorem}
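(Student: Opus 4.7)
The plan is to iterate the two multiplicative bounds in Facts \ref{fact:key1} and \ref{fact:key2} along the switching sequence, take logarithms, and recognise that the resulting exponential rate is precisely the expression inside the $\varlimsup$ of \eqref{e:stab_condn}.

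First I would fix $x_0$ and $\sigma\in\Ss$ with switching instants $0=\tau_0<\tau_1<\cdots$. On each interval $[\tau_i,\tau_{i+1}[$ the subsystem $\sigma(\tau_i)$ is active, so Fact \ref{fact:key1} contributes the factor $\exp(-\hat{\lambda}_{\sigma(\tau_i)}(t-\tau_i))$ to $V_{\sigma(\tau_i)}(x(t))$, while at each switching instant Fact \ref{fact:key2} contributes the multiplicative jump $\hat{\mu}_{\sigma(\tau_{i-1})\sigma(\tau_i)}$. Concatenating these bounds from $0$ to $t$, and regrouping the resulting factors by subsystem index $p\in\P$ and by admissible edge $(p,q)\in E(\P)$, yields
\[
V_{\sigma(t)}(x(t))\le V_{\sigma(0)}(x_0)\,\exp\Bigl(-\sum_{p\in\P}\hat{\lambda}_p\,\Tsig_p(t)\Bigr)\prod_{(p,q)\in E(\P)}\hat{\mu}_{pq}^{\,\Ntsigma_{pq}(t)}.
\]

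Next I would take logarithms, divide through by $t$, and substitute $\Ntsigma_{pq}(t)=\nu(t)\rho_{pq}(t)\,t$ and $\Tsig_p(t)=\eta_p(t)\,t$. Splitting the sum $\sum_{p\in\P}\hat{\lambda}_p\eta_p(t)$ according to the sign of $\hat{\lambda}_p$ (positive on $\P_S$, non-positive on $\P_U$) produces
\[
\frac{\ln V_{\sigma(t)}(x(t))}{t}\le \frac{\ln V_{\sigma(0)}(x_0)}{t} + \Psi(t),
\]
where $\Psi(t)$ is exactly the argument of the $\varlimsup$ in \eqref{e:stab_condn}. By hypothesis there exist $\delta>0$ and $T_0>0$ with $\Psi(t)\le -\delta$ for every $t\ge T_0$, whence $V_{\sigma(t)}(x(t))\le V_{\sigma(0)}(x_0)\exp(-\delta t/2)$ for all large $t$; combining with the quadratic sandwich $\lambda_{\min}(P_p)\norm{x}^2\le V_p(x)\le \lambda_{\max}(P_p)\norm{x}^2$ then delivers $\norm{x(t)}\to 0$.

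To complete the GAS conclusion in the sense of Definition \ref{d:gas} I still need Lyapunov stability on the transient window $[0,T_0]$, which I would obtain by crudely bounding the exponent by $\max_{p\in\P}|\hat{\lambda}_p|T_0+(\max_{(p,q)\in E(\P)}\ln\hat{\mu}_{pq})\Ntsigma(T_0)$ — a finite quantity since $\sigma$ is piecewise constant — producing a uniform constant $C$ with $\norm{x(t)}\le C\norm{x_0}$ throughout. The main obstacle I anticipate is the bookkeeping in the first step: the chained inequality must be regrouped by subsystem index and by edge, and the sign conventions for unstable subsystems carefully tracked, so that the raw bound aligns cleanly with the asymptotic quantities $\nu$, $\rho_{pq}$, $\eta_p$ defined in \eqref{e:sw_freq}--\eqref{e:frac_activ}. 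A secondary subtlety is that the $\varlimsup$ hypothesis gives no uniform control for small $t$, which is why the Lyapunov-stability half of Definition \ref{d:gas} must be recovered separately from the asymptotic convergence.
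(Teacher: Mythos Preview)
The paper does not itself prove Theorem~\ref{t:stab_thm}; the result is quoted verbatim from \cite[Theorem~5]{pqr}. That said, your proposal is precisely the mirror image of the argument the paper does supply for the dual instability result, Theorem~\ref{t:instab_thm} (compare your chained bound with \eqref{e:pf1_step2}--\eqref{e:pf1_step7} and see Remark~\ref{rem:stab-instab_diff}): iterate Facts~\ref{fact:key1} and~\ref{fact:key2} along the switching sequence, regroup the resulting product by subsystem index and by admissible edge, recognise the exponent as $t$ times the quantity under the $\varlimsup$, and close with the quadratic sandwich \eqref{e:Lyap_key}. Your separate treatment of the transient window $[0,T_0]$ to recover Lyapunov stability is the standard way to compensate for the fact that \eqref{e:stab_condn} carries no information at finite time, and is consistent with how the cited reference proceeds.
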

	
	Theorem \ref{t:stab_thm} contains a class of switching signals under which the switched system \eqref{e:swsys} is GAS. The characterization of stability relies solely on the asymptotic properties of a switching signal \(\sigma\). Fix \(t > 0\). The term \(\displaystyle{\nu(t)\sum_{(p,q)\in E(\P)}(\ln\hat{\mu}_{pq})\rho_{pq}(t)}\) involves the frequency of switching of \(\sigma\) at \(t\) and the frequency of transitions from subsystem \(p\) to subsystem \(q\) on \(]0,t]\) weighted by a measure of increase or decrease in the corresponding Lyapunov-like functions \(V_{p}\) and \(V_{q}\), while the sum \(\displaystyle{- \sum_{p\in\P_{S}}\abs{\hat{\lambda}_{p}}\eta_{p}(t)+\sum_{p\in\P_{U}}\abs{\hat{\lambda}_{p}}\eta_{p}(t)}\) contains fraction of activation of asymptotically stable and unstable subsystems on \(]0,t]\) weighted by a measure of their (in)stability. If a switching signal \(\sigma\) satisfies that the sum
	\[
		\nu(t)\sum_{(p,q)\in E(\P)}(\ln\hat{\mu}_{pq})\rho_{pq}(t) - \sum_{p\in\P_{S}}\abs{\hat{\lambda}_{p}}\eta_{p}(t)
			+\sum_{p\in\P_{U}}\abs{\hat{\lambda}_{p}}\eta_{p}(t)
	\]
	is strictly less than zero in the asymptote, then it is stabilizing. The properties of \(\sigma\) on \emph{every} interval of time is not considered. In the sequel we let \(\Sss\subset\Ss\) denote the set of all switching signals \(\sigma\) that satisfies condition \eqref{e:stab_condn}.
	
	\begin{remark}
	\label{rem:why_prq}
	\rm{
		From the set of sufficient conditions for stability of switched systems available in the literature, we selected \cite[Theorem 5]{pqr} as reference due to its generality. It was shown in \cite[Theorem 6]{pqr} that if a switching signal satisfies (average) dwell time condition \cite{Hespanha'99, chatterjee07, Liberzon'12} or asymptotic frequency and dwell time condition \cite{abc}, then it satisfies the stability condition \eqref{e:stab_condn}. 
	}
	\end{remark}
	
	\begin{remark}
	\label{rem:non-unique_scalars}
	\rm{
		Notice that given the matrices \(A_{p}\), \(p\in\P\), the choice of the Lyapunov-like functions \eqref{e:Lyap_func-defn} and consequently, the scalars \(\hat{\lambda}_{p}\), \(p\in\P\) and \(\hat{\mu}_{pq}\), \((p,q)\in E(\P)\) is not unique. In the sequel, for a fixed switching signal \(\sigma\in\Ss\), we will verify conditions that involve the scalars \(\hat{\lambda}_{p}\), \(p\in\P\) and \(\hat{\mu}_{pq}\), \((p,q)\in E(\P)\) with respect to ``given'' matrices \(P_{p}\), \(p\in\P\).
	}
	\end{remark}
	
	\begin{cor}
	\label{cor:instab_cor}
	\rm{
		Consider a family of systems \eqref{e:family}. If a switching signal \(\sigma\) ensures instability of the switched system \eqref{e:swsys}, then it satisfies
		\begin{align}
		\label{e:instab_condn1}
			\varlimsup_{t\to+\infty}\Biggl(\nu(t)\sum_{(p,q)\in E(\P)}(\ln\hat{\mu}_{pq})\rho_{pq}(t) &- \sum_{p\in\P_{S}}\abs{\hat{\lambda}_{p}}\eta_{p}(t)
			+ \sum_{p\in\P_{U}}\abs{\hat{\lambda}_{p}}\eta_{p}(t) \Biggr) \geq 0,
		\end{align}
		where \(\hat{\lambda}_{p}\), \(p\in\P\) and \(\hat{\mu}_{pq}\), \((p,q)\in E(\P)\) are as in Facts \ref{fact:key1} and \ref{fact:key2}, respectively, and \(\nu(t)\), \(\rho_{pq}(t)\), \((p,q)\in E(\P)\) and \(\eta_{p}(t)\), \(p\in\P\) are as defined in \eqref{e:sw_freq}, \eqref{e:trans_freq} and \eqref{e:frac_activ}, respectively.
	}	
	\end{cor}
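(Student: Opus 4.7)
The proof is essentially an application of the contrapositive of Theorem \ref{t:stab_thm}. The plan is to argue by contradiction: suppose $\sigma$ is destabilizing in the sense of Definition \ref{d:uns} (so that for every $x(0) \neq 0$ the corresponding trajectory of \eqref{e:swsys} satisfies $\norm{x(t)}\to+\infty$ as $t\to+\infty$), yet the limsup in \eqref{e:instab_condn1} is strictly negative, i.e.\
\[
\varlimsup_{t\to+\infty}\Biggl(\nu(t)\sum_{(p,q)\in E(\P)}(\ln\hat{\mu}_{pq})\rho_{pq}(t) - \sum_{p\in\P_{S}}\abs{\hat{\lambda}_{p}}\eta_{p}(t)
+ \sum_{p\in\P_{U}}\abs{\hat{\lambda}_{p}}\eta_{p}(t) \Biggr) < 0.
\]

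Then $\sigma$ satisfies the stability condition \eqref{e:stab_condn}, so $\sigma\in\Sss$, and Theorem \ref{t:stab_thm} applies directly. This yields that the switched system \eqref{e:swsys} is globally asymptotically stable for $\sigma$; in particular, for every $x(0)\in\R^{d}$ we have $\norm{x(t)}\to 0$ as $t\to+\infty$. Choosing any $x(0)\neq 0$ then contradicts the instability hypothesis, since $\norm{x(t)}$ cannot simultaneously tend to $0$ and to $+\infty$.

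Therefore the assumption that the limsup is strictly negative must fail, and \eqref{e:instab_condn1} holds. I expect no real obstacle here: the scalars $\hat{\lambda}_{p}$ and $\hat{\mu}_{pq}$ appearing in \eqref{e:instab_condn1} are precisely those furnished by Facts \ref{fact:key1} and \ref{fact:key2} (as emphasized in Remark \ref{rem:non-unique_scalars}), and by Remark \ref{rem:non-unique_scalars} the relevant scalars are to be interpreted as associated to a fixed choice of the matrices $P_{p}$, so that the statement of the corollary is consistent with that of Theorem \ref{t:stab_thm}. The only thing worth flagging is the asymmetry between the conclusions of Definitions \ref{d:gas} and \ref{d:uns}: instability (blow-up of every nonzero trajectory) is stronger than mere failure of GAS, so instability is certainly incompatible with GAS, which is all that the contrapositive argument requires.
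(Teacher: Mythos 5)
Your proposal is correct and matches the paper's own justification: the paper simply states that Corollary \ref{cor:instab_cor} is the contrapositive of Theorem \ref{t:stab_thm}, which is exactly the contradiction argument you spell out. Your added observation that instability in the sense of Definition \ref{d:uns} is strictly stronger than failure of GAS (so the contrapositive indeed applies) is a worthwhile clarification, but the route is the same.
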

	Corollary \ref{cor:instab_cor} is a contrapositive statement of Theorem \ref{t:stab_thm}. Notice that condition \eqref{e:instab_condn1} is a necessary condition for instability of \eqref{e:swsys}, but is not a sufficient condition.
	
	\begin{example}
	\label{ex:counter_ex1}
	\rm{
	Consider a family of systems \eqref{e:family} with \(\P = \{1,2\}\), where \(A_{1} = \pmat{-0.2 & -0.4\\3 & -0.2}\) and \(A_{2} = \pmat{-0.2 & -3\\0.4 & -0.2}\). Clearly, \(\P_{S} = \{1,2\}\) and \(\P_{U} = \emptyset\). Let \(E(\P) = \{(1,2),(2,1)\}\).
	
	We fix the Lyapunov-like functions \(V_{p}(\xi) = <P_{p}\xi,\xi>\), \(p\in\P\) with \(P_{1} = \pmat{10.3629 & 0.5242\\0.5242 & 1.4516}\) and \(P_{2} = \pmat{1.4516 & -0.5242\\-0.5242 & 10.3629}\). The scalars \(\hat{\lambda}_{p}\), \(p\in\P\) and \(\hat{\mu}_{pq}\), \((p,q)\in E(\P)\) are obtained from \(P_{p}\), \(p\in\P\) as follows:{\footnote{By selecting \(Q_{p} = I_{2\times 2}\), \(p\in\P\) and solving the Lyapunov equation \(A_{p}^\top P_{p} + P_{p}A_{p} = -Q_{p}\), we obtain the said choice of \(P_{p}\), \(p\in\P\). The scalars \(\hat{\lambda}_{p}\), \(p\in\P\) are computed as \(\displaystyle{\hat{\lambda}_{p} = \frac{\lambda_{\min}(Q_{p})}{\lambda_{\max}(P_{p})}}\). Following \cite[Proposition 2.4]{arxiv'13} we compute \(\hat{\mu}_{pq}\), \((p,q)\in E(\P)\) as \(\lambda_{\max}(P_{q}P_{p}^{-1})\).}}
	\[
		\hat{\lambda}_{1} = \hat{\lambda}_{2} = 0.0962,\quad\hat{\mu}_{12} = \hat{\mu}_{21} = 7.3149.
	\]
	 Now, consider a switching signal \(\sigma\) that satisfies:
	 \begin{itemize}[label = \(\circ\),leftmargin=*]
	 	\item \(\displaystyle{\Ntsigma(t) = \lfloor{\frac{t}{10}}\rfloor}\),
		\item \(\displaystyle{\Ntsigma_{pq}(t) = \frac{\Ntsigma(t)}{2}}\), \((p,q)\in E(\P)\), and
		\item \(\T_{p}(t) = \frac{t}{2}\), \(p\in\P\).
	\end{itemize}
	Clearly,
	 \begin{itemize}[label = \(\circ\),leftmargin=*]
		\item \(\displaystyle{\nu(t) = \frac{1}{10}}\),
		\item \(\displaystyle{\rho_{pq}(t) = \frac{1}{2}}\), \((p,q)\in E(\P)\), and
		\item \(\displaystyle{\eta_{p}(t) = \frac{1}{2}}\), \(p\in\P\).
	\end{itemize}
		We have
	\begin{align*}
		\varlimsup_{t\to+\infty}\Biggl(&\nu(t)\sum_{(p,q)\in E(\P)}(\ln\hat{\mu}_{pq})\rho_{pq}(t) - \sum_{p\in\P_{S}}\abs{\hat{\lambda}_{p}}\eta_{p}(t)
		+ \sum_{p\in\P_{U}}\abs{\hat{\lambda}_{p}}\eta_{p}(t) \Biggr)= 0.1028 > 0.
	\end{align*}
	However, we observe that \(\sigma\) is stabilizing for \eqref{e:swsys}. See Figures \ref{fig:sw_plot2} and \ref{fig:x_plot2} for illustrations of \((\sigma(t))_{t\geq 0}\) and \((\norm{x(t)})_{t\geq 0}\) till \(t = 120\) units of time, respectively. The initial condition \(x(0)\) is chosen as \(\pmat{-1.0883\\2.9263}\).
	\begin{figure}[htbp]
			\includegraphics[scale = 0.45]{figure1}
			\caption{Plot of \((\sigma(t))_{t\geq 0}\) for Example \ref{ex:counter_ex1}} \label{fig:sw_plot2}
		\end{figure}
		\begin{figure}[htbp]
			\includegraphics[scale = 0.45]{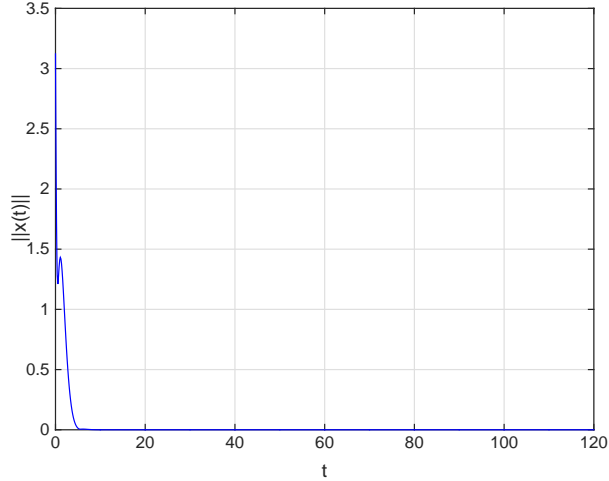}
			\caption{Plot of \((\norm{x(t))})_{t\geq 0}\) for Example \ref{ex:counter_ex1}} \label{fig:x_plot2}
		\end{figure}
	}
	\end{example}
	\begin{remark}
	\label{rem:non-unique-scalars}
	\rm{
		As highlighted in Remark \ref{rem:non-unique_scalars}, non-satisfaction of condition \eqref{e:instab_condn1} in the above example is for a fixed choice of the matrices \(P_{p}\), \(p\in\P\). To wit, we do not claim that the family of systems \eqref{e:family} in Example \ref{ex:counter_ex1} does not admit multiple Lyapunov-like functions such that condition \eqref{e:instab_condn1} holds. Ideally, given a family of systems \eqref{e:family} and a switching signal \(\sigma\), one would like to determine if there exists a choice of \(P_{p}\), \(p\in\P\) such that with the corresponding scalars \(\hat{\lambda}_{p}\), \(p\in\P\) and \(\hat{\mu}_{pq}\), \((p,q)\in E(\P)\), condition \eqref{e:instab_condn1} holds. To the best of our knowledge, this selection problem is computationally difficult due to the non-convexity associated with the computation of \(\hat{\lambda}_{p}\), \(p\in\P\) and \(\hat{\mu}_{pq}\), \((p,q)\in E(\P)\) beyond certain restrictive conditions on the matrices \(P_{p}\), \(p\in\P\). Checking properties of switching signals with respect to given multiple Lyapunov-like functions is standard in the literature, see e.g., \cite{mitra_ADT}.
	}
	\end{remark}
	\begin{remark}
	\label{rem:why_suff}
	\rm{
		Stability analysis of switched systems employing multiple Lyapunov-like functions involves compensating the maximum increase of these functions, caused by activation of unstable subsystems and occurrence of switches, by the minimum decrease in these functions, caused by activation of asymptotically stable subsystems. Therefore, the scalars \(\hat{\lambda}_{p}\), \(p\in\P\) and \(\hat{\mu}_{pq}\), \((p,q)\in E(\P)\) are useful in this study. However, the inequalities \eqref{e:key_ineq1} and \eqref{e:key_ineq2} are not sufficient to study instability of the switched system \eqref{e:swsys}. Indeed, divergence of \(\norm{x(t)}\) for all \(x(0)\) under the ``worst'' case switching is not sufficient for instability.
	}
	\end{remark}
	
\subsection{The problem}
\label{ss:prob_defn}
	The problem of our interest is the following:
	\begin{problem}
	\label{prob:mainprob}
	\rm{
		Given a family of systems \eqref{e:family} containing both asymptotically stable and unstable subsystems, and a set of admissible transitions \(E(\P)\), characterize a class of switching signals \(\mathcal{S}_{u}\subset\mathcal{S}\) under which the switched linear system \eqref{e:swsys} is unstable.
	}
	\end{problem}
	
	\begin{remark}
	\label{rem: non_trivial}
	\rm{
		Since \(\P_{U}\neq\emptyset\), a switching signal \(\sigma\) satisfying \(\sigma(t) = p\) for all \(t\) with a fixed \(p\in\P_{U}\) is destabilizing. However, we are seeking for a (possibly large) class of destabilizing switching signals not restricted to the class of constant switching signals.
	}
	\end{remark}

			Intuitively, for instability of the switched system \eqref{e:swsys} it is sufficient to capture the minimum increase and maximum decrease of the Lyapunov-like functions \(V_{p}\), \(p\in\P\), and ensure divergence of \(\norm{x(t)}\) for all \(x(0)\) under the ``best'' case switching. We will follow this route.
		

\subsection{Auxiliary results}
\label{ss:aux_res}
	\begin{fact}
		\label{fact:key3}
		\rm{
			For each \(p\in\P\) there exists a pair \((P_{p},\check{\lambda}_{p})\), where \(P_{p}\in\R^{d\times d}\) is a symmetric and positive definite matrix, and
			\begin{itemize}[label = \(\circ\), leftmargin = *]
				\item if \(A_{p}\) is asymptotically stable, then \(\check{\lambda}_{p} > 0\),
				\item if \(A_{p}\) is unstable, then \(\check{\lambda}_{p} \leq 0\),
			\end{itemize}
			such that, with the function \(V_{p}:\R^{d}\to[0,+\infty[\) defined as in \eqref{e:Lyap_func-defn},
			we have for all \(\gamma_{p}(0)\in\R^{d}\), \(t\in[0,+\infty[\),
			\begin{align}
			\label{e:key_ineq3}
				V_{p}(\gamma_{p}(t))\geq \exp(-\check{\lambda}_{p}t)V_{p}(\gamma_{p}(0)),
			\end{align}
			and \(\gamma_{p}(\cdot)\) solves the \(p\)-th system dynamics in \eqref{e:family}, \(p\in\P\).
		}
		\end{fact}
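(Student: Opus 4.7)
The plan is to mirror the proof of Fact \ref{fact:key1} (\cite[Fact 2.3]{arxiv'13}), but extract a pointwise \emph{lower} bound on the Lyapunov-like derivative in place of the upper one. I reuse the same matrices $P_{p}$, so the functions $V_{p}$ in \eqref{e:Lyap_func-defn} remain unchanged. Fixing $p\in\P$ and differentiating $V_{p}$ along a solution $\gamma_{p}(\cdot)$ of the $p$-th subsystem gives
\begin{align*}
\frac{d}{dt}V_{p}(\gamma_{p}(t)) = \langle (A_{p}^{\top}P_{p} + P_{p}A_{p})\gamma_{p}(t),\gamma_{p}(t)\rangle.
\end{align*}
Setting $M_{p}\Let A_{p}^{\top}P_{p} + P_{p}A_{p}$ and applying the generalized Rayleigh inequality $\langle M_{p}\xi,\xi\rangle \geq \lambda_{\min}(P_{p}^{-1/2}M_{p}P_{p}^{-1/2})\langle P_{p}\xi,\xi\rangle$ (valid for any symmetric $M_{p}$ and any PD $P_{p}$), I define
\begin{align*}
\check{\lambda}_{p}\Let -\lambda_{\min}\bigl(P_{p}^{-1/2}M_{p}P_{p}^{-1/2}\bigr),
\end{align*}
so that $\frac{d}{dt}V_{p}(\gamma_{p}(t))\geq -\check{\lambda}_{p}V_{p}(\gamma_{p}(t))$. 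Integrating this scalar differential inequality (the standard Gr\"onwall argument applied to $\ln V_{p}(\gamma_{p}(t))$, with the trivial case $\gamma_{p}(0)=0$ handled separately) yields the claimed exponential lower bound \eqref{e:key_ineq3}.

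To verify the sign of $\check{\lambda}_{p}$: for Hurwitz $A_{p}$, the matrix $P_{p}$ is taken as the solution of the Lyapunov equation $A_{p}^{\top}P_{p}+P_{p}A_{p}=-Q_{p}$ with $Q_{p}\succ 0$ (cf.\ the footnote in Example \ref{ex:counter_ex1}), whence $M_{p}=-Q_{p}\prec 0$ and $\lambda_{\min}(P_{p}^{-1/2}M_{p}P_{p}^{-1/2})<0$, so $\check{\lambda}_{p}>0$. For unstable $A_{p}$, the matrix $P_{p}$ is chosen so that $M_{p}\succeq 0$---most concretely, when $-A_{p}$ is Hurwitz, by solving the Lyapunov equation for $-A_{p}$ to obtain $M_{p}=Q_{p}\succ 0$---which gives $\check{\lambda}_{p}\leq 0$ as required.

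The hard part is precisely this sign check in the unstable case. Unlike Fact \ref{fact:key1}, where the converse Lyapunov theorem automatically forces $\lambda_{\max}(P_{p}^{-1/2}M_{p}P_{p}^{-1/2})\geq 0$ for \emph{every} PD $P_{p}$ whenever $A_{p}$ is non-Hurwitz, here the stronger constraint $M_{p}\succeq 0$ must be actively engineered by choosing $P_{p}$ appropriately for each unstable subsystem. Once such a construction is committed to, the rest is a routine mirror of the proof of Fact \ref{fact:key1}, requiring nothing beyond the Rayleigh-quotient estimate and a single application of Gr\"onwall's inequality.
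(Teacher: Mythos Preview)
Your argument tracks the paper's closely: bound \(\dot V_p/V_p\) from below via a Rayleigh-type estimate and integrate. For \(p\in\P_S\) the two proofs are essentially identical, though you record the tighter constant \(\check\lambda_p=\lambda_{\max}(P_p^{-1/2}Q_pP_p^{-1/2})\) where the paper uses the cruder \(\lambda_{\max}(Q_p)/\lambda_{\min}(P_p)\). The one substantive difference is the unstable construction. The paper does not invoke \(-A_p\); instead it picks \(\varepsilon_p>\mathrm{Re}(\lambda_{\max}(A_p))\), solves the Lyapunov equation for the Hurwitz matrix \(A_p-\varepsilon_pI\), and sets \(\check\lambda_p=-\bigl(2\varepsilon_p-\lambda_{\max}(Q_p)/\lambda_{\min}(P_p)\bigr)\), asserting that \(\varepsilon_p,Q_p\) can be chosen with \(\check\lambda_p\leq 0\). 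Your route via the Lyapunov equation for \(-A_p\) is explicitly limited to \(A_p\) having all eigenvalues in the open right half-plane, a restriction you flag but do not remove. In truth the two constructions have identical scope: the inequality \eqref{e:key_ineq3} with \(\check\lambda_p\leq 0\) forces \(V_p(\gamma_p(t))\geq V_p(\gamma_p(0))\) for every trajectory, which is impossible once \(A_p\) has an eigenvalue with strictly negative real part (a trajectory in the corresponding invariant subspace satisfies \(V_p(\gamma_p(t))\to 0\)). So the paper's shift buys no extra generality over your \(-A_p\) device---it merely conceals the same limitation inside an unverified assertion---and your proof is as complete as the paper's.
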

		
	Fact \ref{fact:key3} uses Lyapunov-like functions as defined in \eqref{e:Lyap_func-defn}, and captures the maximum rate of decay and minimum rate of growth of \(V_{p}\) for \(p\in\P_{S}\) and \(p\in\P_{U}\), respectively. The scalar \(\check{\lambda}_{p}\), \(p\in\P\) gives a quantitative measure of (in)stability of the \(p\)-th subsystem in the above sense. We provide a short proof of Fact \ref{fact:key3} in \S\ref{s:all_proofs1}.
	
	We also require a measure of the minimum increase between the Lyapunov-like functions \(V_{p}\) and \(V_{q}\) caused by a transition from subsystem \(p\) to subsystem \(q\).
		\begin{fact}
		\label{fact:key4}
		\rm{
			For each \((p,q)\in E(\P)\), the respective Lyapunov-like functions are related as follows: there exists \(\check{\mu}_{pq} > 0\) such that
			\begin{align}
			\label{e:key_ineq4}
			 	V_{q}(\xi) \geq 	\check{\mu}_{pq}V_{p}(\xi)\:\:\text{for all}\:\xi\in\R^{d}.
			\end{align}
		}
		\end{fact}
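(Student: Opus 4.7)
The plan is to mirror the proof of Fact \ref{fact:key2} but replace the operator-norm/$\lambda_{\max}$ bound with an $\lambda_{\min}$ bound, so that the symmetric positive definite matrices $P_p,P_q$ produce a two-sided comparison between the quadratic forms $V_p$ and $V_q$.

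First, I would observe that since $P_p$ is symmetric and positive definite, its unique symmetric positive definite square root $P_p^{1/2}$ (and hence $P_p^{-1/2}$) exists. Performing the change of variable $\xi = P_p^{-1/2}\zeta$ in the inequality to be shown, the right-hand side becomes $\check{\mu}_{pq}\inprod{P_p\xi}{\xi} = \check{\mu}_{pq}\norm{\zeta}^{2}$, and the left-hand side becomes $\inprod{P_{p}^{-1/2} P_q P_p^{-1/2}\zeta}{\zeta}$. The desired inequality \eqref{e:key_ineq4} is therefore equivalent to
\begin{align*}
\inprod{P_p^{-1/2}P_q P_p^{-1/2}\zeta}{\zeta} \geq \check{\mu}_{pq}\norm{\zeta}^{2}\quad\text{for all }\zeta\in\R^{d}.
\end{align*}
Since $\xi\mapsto P_p^{1/2}\xi$ is a bijection of $\R^{d}$, establishing this equivalent statement suffices.

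Next I would select $\check{\mu}_{pq}\Let\lambda_{\min}\bigl(P_p^{-1/2} P_q P_p^{-1/2}\bigr)$, in direct analogy with the tight estimate $\lambda_{\max}(P_q P_p^{-1})$ recalled after Fact \ref{fact:key2}. The matrix $P_p^{-1/2} P_q P_p^{-1/2}$ is symmetric, and it is positive definite because it is congruent to $P_q$ (congruence preserves signature, and $P_q\succ 0$ by hypothesis). Consequently all its eigenvalues are strictly positive, in particular $\check{\mu}_{pq}>0$, and the Rayleigh quotient inequality gives the required bound on $\inprod{P_p^{-1/2}P_q P_p^{-1/2}\zeta}{\zeta}$. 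Reversing the change of variable yields \eqref{e:key_ineq4}. I would also note, for the record, that since $P_p^{-1/2} P_q P_p^{-1/2}$ is similar to $P_q P_p^{-1}$, one may equivalently write $\check{\mu}_{pq}=\lambda_{\min}(P_q P_p^{-1})$, which parallels the formula for $\hat{\mu}_{pq}$.

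There is no real obstacle here: the only subtle point is justifying the strict positivity of $\check{\mu}_{pq}$, which is immediate from Sylvester's law of inertia (or equivalently from the fact that $P_p^{-1/2}$ is invertible, so $P_p^{-1/2}P_q P_p^{-1/2}\succ 0$ whenever $P_q\succ 0$). The estimate obtained is the tight one, analogous to \cite[Proposition 2.4]{arxiv'13}.
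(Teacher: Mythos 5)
Your proof is correct and follows essentially the same route as the paper, which establishes Fact \ref{fact:key4} together with the tight estimate via Proposition \ref{prop:mu_est}: the same substitution \(y = P_{p}^{1/2}\xi\), the same observation that \(P_{p}^{-1/2}P_{q}P_{p}^{-1/2}\) is symmetric positive definite and similar to \(P_{q}P_{p}^{-1}\), and the same Rayleigh-quotient identification \(\check{\mu}_{pq}=\lambda_{\min}(P_{q}P_{p}^{-1})>0\). No gaps.
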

		Along the lines of Remark \ref{rem:non-unique_scalars}, the choice of the scalars \(\check{\lambda}_{p}\), \(p\in\P\) and \(\check{\mu}_{pq}\), \((p,q)\in E(\P)\) is not unique.  In the spirit of \cite[Proposition 2.4]{arxiv'13} we propose a \emph{tight} estimate of \(\check{\mu}_{pq}\), \((p,q)\in E(\P)\) as follows:
		\begin{prop}
		\label{prop:mu_est}
		\rm{
			Let the Lyapunov-like functions be defined as in \eqref{e:Lyap_func-defn} with each \(P_{p}\) symmetric and positive definite, \(p\in\P\). Then \(\check{\mu}_{pq}\) in \eqref{e:key_ineq2} can be computed as
			\begin{align}
			\label{e:mu_est}
				\check{\mu}_{pq} = \lambda_{\min}(P_{q}P_{p}^{-1}),\:\:(p,q)\in E(\P).
			\end{align}
		}
		\end{prop}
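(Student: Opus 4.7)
The goal is to identify the largest scalar $\check{\mu}_{pq}$ such that $\langle P_q \xi, \xi\rangle \geq \check{\mu}_{pq}\langle P_p\xi,\xi\rangle$ holds for every $\xi\in\R^d$. The plan is to cast this as a generalized Rayleigh quotient minimization problem and then reduce it to a standard symmetric eigenvalue problem via a square-root change of variables, mirroring the template used in \cite[Proposition 2.4]{arxiv'13} for the maximum case.

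First, I would observe that the tightest admissible $\check{\mu}_{pq}$ is exactly
\[
  \check{\mu}_{pq} \;=\; \inf_{\xi\neq 0}\frac{\xi^\top P_q\xi}{\xi^\top P_p\xi}.
\]
Since $P_p$ is symmetric and positive definite, it admits a unique symmetric positive definite square root $P_p^{1/2}$, which is invertible. I would then perform the substitution $\xi = P_p^{-1/2}\zeta$; this is a bijection of $\R^d\setminus\{0\}$ onto itself and yields $\xi^\top P_p\xi = \zeta^\top\zeta$, while $\xi^\top P_q\xi = \zeta^\top(P_p^{-1/2}P_q P_p^{-1/2})\zeta$. The infimum then reduces to
\[
  \check{\mu}_{pq} \;=\; \inf_{\zeta\neq 0}\frac{\zeta^\top(P_p^{-1/2}P_q P_p^{-1/2})\zeta}{\zeta^\top\zeta} \;=\; \lambda_{\min}\bigl(P_p^{-1/2}P_q P_p^{-1/2}\bigr),
\]
using the Rayleigh--Ritz characterization for the symmetric matrix $P_p^{-1/2}P_q P_p^{-1/2}$.

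To conclude, I would note the similarity relation
\[
  P_p^{-1/2}P_q P_p^{-1/2} \;=\; P_p^{-1/2}\,(P_q P_p^{-1})\,P_p^{1/2},
\]
so $P_p^{-1/2}P_q P_p^{-1/2}$ and $P_q P_p^{-1}$ share the same spectrum. Hence $\lambda_{\min}(P_q P_p^{-1})$ is well defined as a real number and equals the quantity above, giving the claimed formula \eqref{e:mu_est}. Positive definiteness of $P_q$ together with invertibility of $P_p^{-1/2}$ ensures $P_p^{-1/2}P_q P_p^{-1/2}$ is symmetric positive definite, so $\check{\mu}_{pq}>0$, consistent with Fact \ref{fact:key4}.

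I do not expect any genuine obstacle here: the proof is a short, direct calculation, and the only minor point worth stating carefully is that although $P_q P_p^{-1}$ is in general not symmetric, its spectrum is real because of the similarity with the symmetric matrix $P_p^{-1/2}P_q P_p^{-1/2}$, which legitimizes writing $\lambda_{\min}(P_q P_p^{-1})$ without ambiguity.
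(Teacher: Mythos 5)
Your proof is correct and follows essentially the same route as the paper: both reduce \(\inf_{\xi\neq 0} V_q(\xi)/V_p(\xi)\) to the Rayleigh quotient of \(P_p^{-1/2}P_qP_p^{-1/2}\) via the substitution \(y=P_p^{1/2}\xi\), and both use the similarity of \(P_qP_p^{-1}\) to that symmetric positive definite matrix to justify that \(\lambda_{\min}(P_qP_p^{-1})\) is a well-defined real number equal to the infimum.
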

		
		A proof of Proposition \ref{prop:mu_est} is given in \S\ref{s:all_proofs1}. The scalars \(\check{\lambda}_{p}\), \(p\in\P\) and \(\check{\mu}_{pq}\), \((p,q)\in E(\P)\) are relevant to our analysis for instability.
\section{Main results}
\label{s:mainres}	
	We are now in a position to present a class of destabilizing switching signals for switched linear systems.	
\subsection{Destabilizing switching signals}
\label{ss:destab_sw}
	\begin{theorem}
	\label{t:instab_thm}
	\rm{
		Consider a family of systems \eqref{e:family}. The switched system \eqref{e:swsys} is unstable for every switching signal \(\sigma\in\Ss\) that satisfies
		\begin{align}
		\label{e:instab_condn}
			\varliminf_{t\to+\infty}\Biggl(\nu(t)\sum_{(p,q)\in E(\P)}(\ln\check{\mu}_{pq})\rho_{pq}(t) &- \sum_{p\in\P_{S}}\abs{\check{\lambda}_{p}}\eta_{p}(t)
			+\sum_{p\in\P_{U}}\abs{\check{\lambda}_{p}}\eta_{p}(t)\Biggr) > 0,
		\end{align}
		where \(\check{\lambda}_{p}\), \(p\in\P\) and \(\check{\mu}_{pq}\), \((p,q)\in E(\P)\) are as in Facts \ref{fact:key3} and \ref{fact:key4}, respectively, and \(\nu(t)\), \(\rho_{pq}(t)\), \((p,q)\in E(\P)\) and \(\eta_{p}(t)\), \(p\in\P\) are as defined in \eqref{e:sw_freq}, \eqref{e:trans_freq} and \eqref{e:frac_activ}, respectively.
	}
	\end{theorem}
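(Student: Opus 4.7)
The plan is to mirror the argument of \cite[Theorem 5]{pqr} with every inequality reversed: rather than bounding $V_{\sigma(t)}(x(t))$ from above via Facts \ref{fact:key1}--\ref{fact:key2}, I would bound it from below using the minimum-rate Facts \ref{fact:key3} and \ref{fact:key4}, and then convert hypothesis \eqref{e:instab_condn} into exponential growth of $V_{\sigma(t)}(x(t))$, which transfers to $\norm{x(t)}$ because the $V_p$ are quadratic forms with positive definite matrices.

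Concretely, fix $\sigma\in\Ss$ satisfying \eqref{e:instab_condn} and any $x_{0}\neq 0$, and let $0=\tau_{0}<\tau_{1}<\cdots$ be the switching instants with $v_{i}:=\sigma(\tau_{i})$. On each interval $[\tau_{i},\tau_{i+1}[$, Fact \ref{fact:key3} gives $V_{v_{i}}(x(\tau_{i+1}^{-}))\geq\exp(-\check{\lambda}_{v_{i}}(\tau_{i+1}-\tau_{i}))V_{v_{i}}(x(\tau_{i}))$, and at each switching instant Fact \ref{fact:key4} yields $V_{v_{i+1}}(x(\tau_{i+1}))\geq\check{\mu}_{v_{i}v_{i+1}}V_{v_{i}}(x(\tau_{i+1}^{-}))$. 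Iterating from $\tau_{0}$ up to $t$ and regrouping by transition pair and by subsystem, I would obtain
\begin{align*}
V_{\sigma(t)}(x(t))\geq\Biggl(\prod_{(p,q)\in E(\P)}\check{\mu}_{pq}^{\Ntsigma_{pq}(t)}\Biggr)\exp\Biggl(-\sum_{p\in\P}\check{\lambda}_{p}\Tsig_{p}(t)\Biggr)V_{\sigma(0)}(x_{0}).
\end{align*}
Taking logarithms, dividing by $t$, and invoking the definitions \eqref{e:sw_freq}--\eqref{e:frac_activ} together with the sign split $\check{\lambda}_{p}=\abs{\check{\lambda}_{p}}$ for $p\in\P_{S}$ and $\check{\lambda}_{p}=-\abs{\check{\lambda}_{p}}$ for $p\in\P_{U}$ reproduces on the right-hand side precisely the bracketed expression in \eqref{e:instab_condn}, up to the boundary term $\tfrac{1}{t}\ln V_{\sigma(0)}(x_{0})$ that vanishes as $t\to+\infty$.

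Hypothesis \eqref{e:instab_condn} then supplies constants $c>0$ and $T>0$ with $\tfrac{1}{t}\ln V_{\sigma(t)}(x(t))\geq c/2$ for $t\geq T$, so $V_{\sigma(t)}(x(t))\to+\infty$. Since $V_{p}(\xi)\leq\lambda_{\max}(P_{p})\norm{\xi}^{2}\leq M\norm{\xi}^{2}$ uniformly in $p\in\P$ with $M:=\max_{p\in\P}\lambda_{\max}(P_{p})<\infty$, this forces $\norm{x(t)}\to+\infty$ for every $x_{0}\neq 0$, which is instability in the sense of Definition \ref{d:uns}.

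The main obstacle I foresee is the careful bookkeeping of the iterated inequality: because $\check{\mu}_{pq}$ need not be bounded away from $0$ or above $1$, the logarithms $\ln\check{\mu}_{pq}$ may carry either sign, and I must keep them inside the sum over $E(\P)$ exactly as in \eqref{e:instab_condn} rather than collapsing to absolute values, as is done with $\abs{\check{\lambda}_{p}}$ after the $\P_{S}/\P_{U}$ split. A mild secondary point is justifying that the boundary term $\tfrac{1}{t}\ln V_{\sigma(0)}(x_{0})$ vanishes, but this is immediate from $V_{\sigma(0)}(x_{0})>0$ (all $P_{p}$ positive definite, $x_{0}\neq 0$). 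Otherwise, the argument is a direct dual of the stability proof in \cite{pqr}.
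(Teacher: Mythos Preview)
Your proposal is correct and follows essentially the same approach as the paper: iterate the lower bounds from Facts~\ref{fact:key3} and~\ref{fact:key4}, regroup by transition pair and by subsystem to obtain $V_{\sigma(t)}(x(t))\ge\exp(\psi(t))V_{\sigma(0)}(x_{0})$ with $\psi(t)$ equal to $t$ times the bracketed expression in \eqref{e:instab_condn}, and then use the liminf hypothesis and the uniform quadratic bounds on $V_{p}$ to conclude $\norm{x(t)}\to+\infty$. The only cosmetic difference is that the paper keeps everything inside the exponential and shows $\exp(\psi(t))\to+\infty$ directly, whereas you take logarithms and divide by $t$; the two presentations are equivalent.
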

	
	Given a family of systems \eqref{e:family} containing both asymptotically stable and unstable subsystems, and a set of admissible transitions \(E(\P)\), in Theorem \ref{t:instab_thm} we characterize a class of switching signals under which the switched system \eqref{e:swsys} is unstable. We will call the set of all \(\sigma\) that satisfy condition \eqref{e:instab_condn} as  \(\Ssi\). As in the case of Theorem \ref{t:stab_thm} for stability of \eqref{e:swsys}, our characterization of instability of \eqref{e:swsys} relies on the asymptotic behaviour of frequency of switching, frequency of transition between subsystems, and fraction of activation of subsystems.
		
		In \eqref{e:instab_condn}, the fractions of activation of subsystems, \(\eta_{p}(t)\), \(p\in\P\), are weighted by the maximum rate of decay and minimum rate of growth of the corresponding Lyapunov-like function \(V_{p}\), \(p\in\P\) for asymptotically stable and unstable subsystems, respectively. The weighing factors for frequency of transition between subsystems, \(\rho_{pq}(t)\), \((p,q)\in E(\P)\) are the scalars \(\check{\mu}_{pq}\), \((p,q)\in E(\P)\) that give a measure of the minimum ``jump'' in the corresponding Lyapunov-like functions \(V_{p}\) and \(V_{q}\) caused by a transition from subsystem \(p\) to subsystem \(q\). This is in contrast to the minimum rate of decay and maximum rate of growth of \(V_{p}\), \(p\in\P\) employed in condition \eqref{e:stab_condn}.
	\begin{remark}
	\label{rem:condn_diff}
	\rm{
		On the one hand, estimates of minimum duration of activation of stable subsystems and maximum duration of activation of unstable subsystems are determined to guarantee convergence of \(\norm{x(t)}\) under worst case switching \cite{Hespanha'99,Liberzon'12,def}. On the other hand, we determine estimates of maximum duration of activation of stable subsystems and minimum duration of activation of unstable subsystems to guarantee divergencee of \(\norm{x(t)}\) under best case switching. Indeed, condition \eqref{e:stab_condn} involves a strict upper bound on the asymptotic behaviour of\\\(\displaystyle{\nu(t)\sum_{(p,q)\in E(\P)}(\ln\hat{\mu}_{pq})\rho_{pq}(t) - \sum_{p\in\P_{S}}\abs{\hat{\lambda}_{p}}\eta_{p}(t)
			+\sum_{p\in\P_{U}}\abs{\hat{\lambda}_{p}}\eta_{p}(t)}\),\\ while condition \eqref{e:instab_condn} involves a strict lower bound on the asymptotic behaviour of
\[
    \displaystyle{\nu(t)\sum_{(p,q)\in E(\P)}(\ln\check{\mu}_{pq})\rho_{pq}(t) - \sum_{p\in\P_{S}}\abs{\check{\lambda}_{p}}\eta_{p}(t)+}\displaystyle{\sum_{p\in\P_{U}}\abs{\check{\lambda}_{p}}\eta_{p}(t)}.
\]
	}
	\end{remark}
%
	\begin{remark}
	\label{rem:non_trivial_class}
	\rm{
		Let \(\P_{U} = \emptyset\). It is known that if all subsystems are linear and asymptotically stable, then a sufficiently ``slow'' switching signal between these subsystems preserves stability of the resulting switched system. On the one hand, the known estimates of ``slowness'' \cite{Morse'96,Hespanha'99} are sufficient in the sense that switching at a faster rate does not necessarily guarantee instability of the resulting switched system. On the other hand, condition \eqref{e:instab_condn} provides a measure of how ``fast'' one needs to switch between asymptotically stable subsystems such that the resulting switched system loses stability.
	}
	\end{remark}
	\begin{example}
	\label{ex:exmpl1}
	\rm{
		Consider a family of systems \eqref{e:family} with \(\P = \{1,2,3,4\}\), where
		\begin{align*}
			A_{1} &= \pmat{-0.1857 & -0.7565\\-0.0707 & -0.6500},\quad A_{2} &= \pmat{-0.3509 & -0.2683\\-0.3523 & -0.5491}\quad
			A_{3} &= \pmat{0.1734 & -0.6091\\0.8314 & -0.1966},\quad A_{4} &= \pmat{0.6294 & 0.8116\\-0.7460 & 0.8268}.
		\end{align*}
		Clearly, \(\P_{S} = \{1,2,3\}\) and \(\P_{U} = \{4\}\). Let \(E(\P) = \{(1,2),(1,3),(2,1),\)\((2,4),(3,1),(3,4),(4,2),(4,3)\}\).\\
		
		{\it A. Multiple Lyapunov-like functions}: We choose \(V_{p}(\xi) = \xi^\top P_{p}\xi\), \(p\in\P\). The pairs \((P_{p},\check{\lambda}_{p})\), \(p\in\P\) are computed in the following manner:
		\begin{itemize}[label = \(\circ\), leftmargin = *]
			\item For \(p\in\P_{S}\), we set \(Q_{p} = I_{2\times 2}\), solve the Lyapunov equation \(A_{p}^\top P_{p} + P_{p}A_{p} = -Q_{p}\) for \(P_{p}\), and compute \(\check{\lambda}_{p} = \frac{\lambda_{\max}(Q_{p})}{\lambda_{\min}(P_{p})}\).
			\item For \(p\in\P_{U}\), we set \(\varepsilon_{p} = Re(\lambda_{\max}(A_{p})) + 0.0001\) and \(Q_{p} = I_{2\times 2}\), solve the Lyapunov equation \((A_{p}-\varepsilon_{p}I)^\top P_{p} + P_{p}(A_{p}-\varepsilon_{p}I) = -Q_{p}\) for \(P_{p}\), and compute \(\check{\lambda}_{p} = 2\varepsilon_{p} - \frac{\lambda_{\max}(Q_{p})}{\lambda_{\min}(P_{p})}\).
		\end{itemize}
		We obtain
		\begin{align*}
			(P_{1},\check{\lambda}_{1}) &= \Biggl(\pmat{4.4041 & -4.4942\\-4.4942 & 5.9995}, 1.5691\Biggr),&\quad\quad
			(P_{2},\check{\lambda}_{2}) &= \Biggl(\pmat{2.9643 & -1.5333\\-1.5333 & 1.6598}, 1.5486\Biggr),\\
			(P_{3},\check{\lambda}_{3}) &= \Biggl(\pmat{54.7476 & -12.0193\\-12.0193 & 39.7715}, 0.0302\Biggr),&\quad\quad
			(P_{4},\check{\lambda}_{4}) &= \Biggl(\pmat{4.8542 & -0.6419\\-0.6419 & 5.2809}\times 10^{3}, -1.4562\Biggr).
		\end{align*}
		
		Following Proposition \ref{prop:mu_est} the scalars \(\check{\mu}_{pq}\), \((p,q)\in E(\P)\) are computed as: \(\check{\mu}_{pq} = \lambda_{\min}(P_{q}P_{p}^{-1})\). We have
		\begin{align*}
			\check{\mu}_{12} &= 0.2661, & \check{\mu}_{13} &= 5.6395, &
			\check{\mu}_{21} &= 0.6446, & \check{\mu}_{24} &= 1.3875\times 10^{3},\\
			\check{\mu}_{31} &= 0.0173, & \check{\mu}_{34} &= 87.0252,&
			\check{\mu}_{42} &= 1.4133\times 10^{-4}, & \check{\mu}_{43} &= 0.0070.
		\end{align*}
		\\
		{\it B. Switching signal \(\sigma\)}: Fix \(t > 0\). Let a switching signal \(\sigma\) satisfy
		\begin{itemize}[label = \(\circ\), leftmargin = *]
			\item \(\Ntsigma(t) = \frac{t}{20}\), \(\Ntsigma_{pq}(t) = \frac{\Ntsigma(t)}{8}\), \((p,q)\in E(\P)\),
			\item \(\T_{p}(t) = \frac{t}{8}\), \(p\in\{1,2\}\) and \(\T_{p}(t) = \frac{3t}{8}\), \(p\in\{3,4\}\).
		\end{itemize}
		Consequently,
		\begin{itemize}[label = \(\circ\), leftmargin = *]
			\item \(\nu(t) = \frac{1}{20}\),
			\item \(\rho_{pq}(t) = \frac{1}{8}\), \((p,q)\in E(\P)\),
			\item \(\eta_{p}(t) = \frac{1}{8}\), \(p\in\{1,2\}\) and \(\eta_{p}(t) = \frac{3}{8}\), \(p\in\{3,4\}\).
		\end{itemize}
		\vspace*{0.3cm}
		{\it C. Verification of condition \eqref{e:instab_condn}}: We have
		\begin{align*}
			\varliminf_{t\to+\infty}\Biggl(\nu(t)\sum_{(p,q)\in E(\P)}(\ln\check{\mu}_{pq})\rho_{pq}(t) &- \sum_{p\in\P_{S}}\abs{\check{\lambda}_{p}}\eta_{p}(t)
			+\sum_{p\in\P_{U}}\abs{\check{\lambda}_{p}}\eta_{p}(t)\Biggr) = 0.1064 > 0.
		\end{align*}
		
		In Figures \ref{fig:ex2_sw_plot1} and \ref{fig:ex2_xt_plot1} we illustrate \((\sigma(t))_{t\geq 0}\) and \((\norm{x(t)})_{t\geq 0}\) till \(t = 200\) units of time, respectively. For plotting \((\norm{x(t)})_{t\geq 0}\), ten different initial conditions are chosen from \([-10,10]^{2}\) uniformly at random. Divergence of \(\norm{x(t)}\) is observed in each case.
		\begin{figure}[htbp]
			\includegraphics[scale = 0.45]{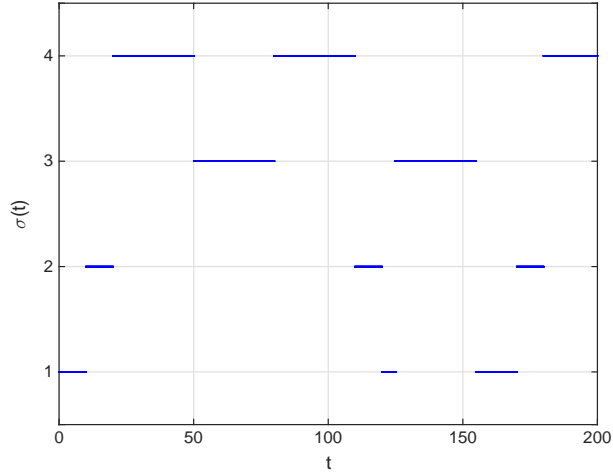}
			\caption{Plot of \((\sigma(t))_{t\geq 0}\) for Example \ref{ex:exmpl1}} \label{fig:ex2_sw_plot1}
		\end{figure}
		\begin{figure}[htbp]
			\includegraphics[scale = 0.45]{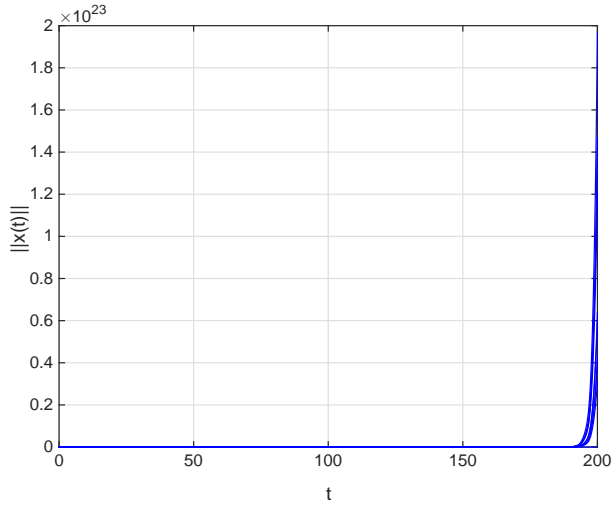}
			\caption{Plot of \((\norm{x(t))})_{t\geq 0}\) for Example \ref{ex:exmpl1}} \label{fig:ex2_xt_plot1}
		\end{figure}
	}
	\end{example}
\subsection{Gap between stability and instability}
\label{ss:gap}
	Given a family of systems \eqref{e:family} and a set of admissible transitions \(E(\P)\), Theorems \ref{t:stab_thm} and \ref{t:instab_thm} characterize classes of stabilizing and destabilizing switching signals \(\Sss\) and \(\Ssi\), respectively. Both the characterizations involve asymptotic properties of switching signals, are derived by employing multiple Lyapunov-like functions, and are sufficient in nature. It is immediate that the elements of \(\Ssi\) are contained in \(\Sss^{c}\). Indeed, fix a switching signal \(\sigma\in\Ssi\). We have \eqref{e:key_ineq1} and \eqref{e:key_ineq3} with \(\check{\lambda}_{p}\geq \hat{\lambda}_{p}\) for \(p\in\P_{S}\) and \(\abs{\check{\lambda}_{p}}\leq\abs{\hat{\lambda}_{p}}\) for \(p\in\P_{U}\). Also, \eqref{e:key_ineq2} and \eqref{e:key_ineq4} hold with \(\check{\mu}_{pq}\leq\hat{\mu}_{pq}\) for each \((p,q)\in E(\P)\). Consequently,
	\begin{align*}
		\nu(t)\sum_{(p,q)\in E(\P)}(\ln\check{\mu}_{pq})\rho_{pq}(t) &- \sum_{p\in\P_{S}}\abs{\check{\lambda}_{p}}\eta_{p}(t)
			+\sum_{p\in\P_{U}}\abs{\check{\lambda}_{p}}\eta_{p}(t)
	\end{align*}
	is at most equal to
	\[
		\nu(t)\sum_{(p,q)\in E(\P)}(\ln\hat{\mu}_{pq})\rho_{pq}(t) - \sum_{p\in\P_{S}}\abs{\hat{\lambda}_{p}}\eta_{p}(t)
		+ \sum_{p\in\P_{U}}\abs{\hat{\lambda}_{p}}\eta_{p}(t)
	\]
	for all \(t > 0\). Since \(\sigma\in\Ssi\),
	\begin{align*}
	\varlimsup_{t\to+\infty}\Biggl(\nu(t)\sum_{(p,q)\in E(\P)}(\ln\hat{\mu}_{pq})\rho_{pq}(t) &- \sum_{p\in\P_{S}}\abs{\hat{\lambda}_{p}}\eta_{p}(t)
			+ \sum_{p\in\P_{U}}\abs{\hat{\lambda}_{p}}\eta_{p}(t) \Biggr) \nless 0.
	\end{align*}
	
	The following proposition asserts that \(\Ssi\) is a strict subset of \(\Sss^{c}\).
	\begin{prop}
	\label{prop:gap}
	\rm{
		Consider a family of systems \eqref{e:family}. Then an element \(\sigma\) of the set \(\Sss^{c}\) is not necessarily an element of the set \(\Ssi\).
	}
	\end{prop}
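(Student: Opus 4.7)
The plan is to exhibit a specific witness $\sigma \in \Sss^{c}\setminus\Ssi$; combined with the containment $\Ssi \subseteq \Sss^{c}$ already established in the paragraph immediately preceding the proposition, this yields the strict inclusion. The switching signal constructed in Example~\ref{ex:counter_ex1} is a natural and economical candidate, so I would reuse it rather than build a new example.

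Example~\ref{ex:counter_ex1} already supplies one half of the argument: for the $P_{1},P_{2}$ and the corresponding $\hat{\lambda}_{p}, \hat{\mu}_{pq}$ chosen there, the limsup in \eqref{e:stab_condn} is shown to equal $0.1028>0$, so $\sigma \notin \Sss$, i.e., $\sigma \in \Sss^{c}$. What remains is to prove $\sigma \notin \Ssi$.

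For this I would compute, using the same Lyapunov-like functions $V_{p}(\xi)=\inprod{P_{p}\xi}{\xi}$, the scalars associated with Facts~\ref{fact:key3} and \ref{fact:key4}: namely $\check{\mu}_{12}=\check{\mu}_{21}=\lambda_{\min}(P_{2}P_{1}^{-1})$ via Proposition~\ref{prop:mu_est}, and $\check{\lambda}_{1},\check{\lambda}_{2}$ from the Lyapunov equation with $Q_{p}=I$ as $\lambda_{\max}(Q_{p})/\lambda_{\min}(P_{p})$ (the max-decay analogue of the min-decay formula used in Example~\ref{ex:counter_ex1}). With $\nu(t)=1/10$, $\rho_{pq}(t)=1/2$, and $\eta_{p}(t)=1/2$ all constant in $t$, the liminf in \eqref{e:instab_condn} reduces to a single explicit real number. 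The qualitative reason it comes out non-positive is that the symmetric choice of $P_{1},P_{2}$ forces $\lambda_{\min}(P_{2}P_{1}^{-1})<1$, so $\ln \check{\mu}_{pq}<0$ and the switching contribution in \eqref{e:instab_condn} is \emph{negative}, unlike its $\hat{\mu}$ counterpart which is non-negative. This sign flip is exactly what makes $\Ssi$ strictly smaller than $\Sss^{c}$, and I expect it to be the main obstacle in a clean write-up: one must track it carefully, because on the stability side $\ln \hat{\mu}_{pq}\ge 0$ always, whereas on the instability side $\ln \check{\mu}_{pq}$ has no definite sign, which is easy to overlook.

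As a sanity check (not a substitute for the computation), Figure~\ref{fig:x_plot2} shows $\norm{x(t)}\to 0$ under this $\sigma$, so the switched system is in fact GAS and therefore not unstable in the sense of Definition~\ref{d:uns}. The contrapositive of Theorem~\ref{t:instab_thm} then independently forces $\sigma\notin\Ssi$, confirming the analytic verdict and giving an alternative, purely qualitative route to the conclusion.
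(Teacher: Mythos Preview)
Your proposal is correct and follows essentially the same route as the paper's own proof: reuse the switching signal and the matrices \(P_{1},P_{2}\) from Example~\ref{ex:counter_ex1}, compute \(\check{\lambda}_{p}=\lambda_{\max}(Q_{p})/\lambda_{\min}(P_{p})\) and \(\check{\mu}_{pq}=\lambda_{\min}(P_{q}P_{p}^{-1})\), and verify that the \(\varliminf\) in \eqref{e:instab_condn} is negative (the paper obtains \(-0.9028\)), whence \(\sigma\in\Sss^{c}\setminus\Ssi\). Your observation that \(\det(P_{2}P_{1}^{-1})=1\) forces \(\check{\mu}_{pq}=1/\hat{\mu}_{pq}<1\), together with \(\P_{U}=\emptyset\), is a nice structural explanation of why the numeric value must come out negative; the paper simply reports the computed figure.
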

	We prove Proposition \ref{prop:gap} with an example \(\sigma\) that satisfies neither condition \eqref{e:stab_condn} nor condition \eqref{e:instab_condn}, see \S\ref{s:all_proofs2}.
	
	Proposition \ref{prop:gap} implies that there is a gap between the characterization of stabilizing and destabilizing switching signals based solely on the asymptotic behaviour of these signals. We use the term ``gap'' in the following sense: (de)stabilizing properties of the elements of \(\Sss^{c}\cap\Ssi^{c}\) cannot be determined from conditions \eqref{e:stab_condn} and \eqref{e:instab_condn}. In Example \ref{ex:counter_ex1} we presented a stabilizing switching signal \(\sigma\) that belongs to \(\Sss^{c}\cap\Ssi^{c}\).\footnote{A proof for \(\sigma\notin\Ssi^{c}\) is given in our proof of Proposition \ref{prop:gap}.} Below we provide an example for a destabilizing switching signal \(\sigma'\) belonging to the set \(\Sss^{c}\cap\Ssi^{c}\).
		\begin{example}
		\label{ex:exmpl2}
		\rm{
			Consider a family of systems \eqref{e:family} with \(\P = \{1,2\}\), where \(A_{1} = \pmat{-0.3509 & -0.2683\\-0.3523 & -0.5491}\) and \(A_{2} = \pmat{0.6294 & 0.8116\\-0.7460 & 0.8268}\). We have \(\P_{S} = \{1\}\) and \(\P_{U} = \{2\}\). Let \(E(\P) = \{(1,2),(2,1)\}\).
			
			By choosing \(Q_{1} = Q_{2} = I_{2\times 2}\), \(\varepsilon_{2} = 0.7278\), we obtain the following estimates of \(P_{p}\), \(\hat{\lambda}_{p}\), \(\check{\lambda}_{p}\), \(p\in\P\) and \(\hat{\mu}_{pq}\), \(\check{\mu}_{pq}\), \((p,q)\in E(\P)\):
			\begin{align*}
				P_{1} &= \pmat{2.9643 & -1.5333\\-1.5333 & 1.6598}, &\quad P_{2} &= \pmat{4.8542 & -0.6419\\-0.6419 & 5.2809}\times 10^3,
			\end{align*}
			\begin{align*}
				\check{\lambda}_{1} &= 1.5486, &\quad \hat{\lambda}_{1} &= 0.2514,\\
				\check{\lambda}_{2} &= -1.4562, &\quad \hat{\lambda}_{2} &= -1.4562,\\
				\check{\mu}_{12} &= 1.3875\times 10^3, &\quad \hat{\mu}_{12} &= 7.0755\times 10^3,\\
				\check{\mu}_{21} &= 1.4133\times 10^{-4}, &\quad \hat{\mu}_{21} &= 7.2070\times 10^{-4}.
			\end{align*}
			
			Consider a switching signal \(\sigma'\) that satisfies
			\begin{itemize}[label = \(\circ\), leftmargin = *]
				\item \(\Ntsigma(t) = \lfloor\frac{t}{10}\rfloor\),
				\item \(\Ntsigma_{pq}(t) = \frac{\Ntsigma(t)}{2}\), \((p,q)\in E(\P)\),
				\item \(\T_{p}(t) = \frac{t}{2}\), \(p\in\P\).
			\end{itemize}
			We have
			\begin{itemize}[label = \(\circ\), leftmargin = *]
				\item \(\nu(t) = \frac{1}{10}\),
				\item \(\rho_{pq}(t) = \frac{1}{2}\), \((p,q)\in E(\P)\),
				\item \(\eta_{p}(t) = \frac{1}{2}\), \(p\in\P\).
			\end{itemize}
			
			Consequently,
			\begin{align*}
				\varlimsup_{t\to+\infty}\Biggl(\nu(t)&\sum_{(p,q)\in E(\P)}(\ln\hat{\mu}_{pq})\rho_{pq}(t) - \sum_{p\in\P_{S}}\abs{\hat{\lambda}_{p}}\eta_{p}(t)
			+ \sum_{p\in\P_{U}}\abs{\hat{\lambda}_{p}}\eta_{p}(t) \Biggr) = 0.6839 > 0,\\
				\varliminf_{t\to+\infty}\Biggl(\nu(t)&\sum_{(p,q)\in E(\P)}(\ln\check{\mu}_{pq})\rho_{pq}(t) - \sum_{p\in\P_{S}}\abs{\check{\lambda}_{p}}\eta_{p}(t)
			+ \sum_{p\in\P_{U}}\abs{\check{\lambda}_{p}}\eta_{p}(t) \Biggr) = -0.1277 < 0,
			\end{align*}
			and \(\sigma\in\Sss^{c}\cap\Ssi^{c}\).
			
			We observe that \(\sigma\) is destabilizing. In Figures \ref{fig:ex3_swplot1} and \ref{fig:ex3_xplot} we demonstrate \((\sigma(t))_{t\geq 0}\) and \((\norm{x(t)})_{t\geq 0}\) till \(t=100\) units of time, respectively. The initial condition \(x(0)\) is chosen as: \(\pmat{9.0044\\-9.3111}\).
			\begin{figure}[htbp]
			\includegraphics[scale = 0.45]{exc_sswplot}
			\caption{Plot of \((\sigma(t))_{t\geq 0}\) for Example \ref{ex:exmpl1}} \label{fig:ex3_swplot1}
		\end{figure}
		\begin{figure}[htbp]
			\includegraphics[scale = 0.45]{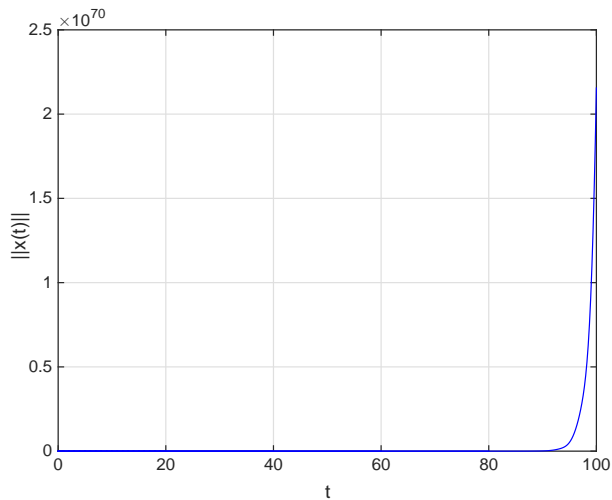}
			\caption{Plot of \((\norm{x(t))})_{t\geq 0}\) for Example \ref{ex:exmpl2}} \label{fig:ex3_xplot}
		\end{figure}
		}
		\end{example}
	\begin{remark}
	\label{rem:gap}
	\rm{
		Recall that for asymptotic characterization of stabilizing switching signals, convergence of \(\norm{x(t)}\) for all \(x(0)\) under the ``worst case'' switching is ensured, while for asymptotic characterization of destabilizing switching signals divergence of \(\norm{x(t)}\) for all \(x(0)\) under the ``best case'' switching is ensured. It is, therefore, immediate that conditions \eqref{e:stab_condn} and \eqref{e:instab_condn} are not sufficient to conclude (in)stability of the switched system \eqref{e:swsys} under the elements of the set \(\Sss^{c}\cap\Ssi^{c}\). Examples \ref{ex:exmpl1} and \ref{ex:exmpl2} hint that given a family of systems \eqref{e:family}, (de)stabilizing properties of a switching signal \(\sigma\in\Sss^{c}\cap\Ssi^{c}\) depend on the properties of the subsystems. We considered a family with all systems asymptotically stable to demonstrate a stabilizing \(\sigma\in\Sss^{c}\cap\Ssi^{c}\), and a family containing both asymptotically stable and unstable systems to demonstrate a destabilizing switching signal \(\sigma'\in\Sss^{c}\cap\Ssi^{c}\). Facts \ref{fact:key1}-\ref{fact:key4} are not sufficient to address (in)stability of \eqref{e:swsys} under a \(\sigma\in\Sss^{c}\cap\Ssi^{c}\), and additional analysis tools are needed.
		}
		\end{remark}
		
		Below we summarize the classes of switching signals discussed in this paper:
		\begin{itemize}[label = \(\circ\),leftmargin = *]
			\item \(\Ss\): set of all switching signals,
			\item \(\Sss\subset\Ss\): set of switching signals that satisfy condition \eqref{e:stab_condn} (stabilizing),
			\item \(\Ssi\subset\Sss^{c}\): set of switching signals that satisfy condition \eqref{e:instab_condn} (destabilizing),
			\item \(\Sss^{c}\cap\Ssi^{c}\): set of switching signals that satisfy neither condition \eqref{e:stab_condn} nor condition \eqref{e:instab_condn} (gap).
		\end{itemize}
\section{Concluding remarks}
\label{s:concln}
	In this paper we proposed a class of destabilizing switching signals for continuous-time switched linear systems. Our characterization of instability is based on the asymptotic behaviour of frequency of switching, frequency of transitions between subsystems, and fraction of activation of subsystems. Asymptotic properties of switching signals were employed to characterize stability of a switched linear system earlier in \cite{pqr}. We showed that our class of destabilizing switching signals is a strict subset of the class of switching signals that does not satisfy the stability condition of \cite{pqr}, and thereby identified a gap between asymptotic characterization of stabilizing and destabilizing switching signals. An interesting future direction is about determining (de)stabilizing properties of the switching signals that satisfy neither the stability condition of \cite{pqr} nor the instability condition proposed in this paper. As discussed in Remark \ref{rem:gap}, this problem cannot be addressed with multiple Lyapunov-like functions based analysis in its standard form, and additional tools are required.
\section{Proofs of Fact \ref{fact:key3} and Proposition \ref{prop:mu_est}}
\label{s:all_proofs1}
	
	\begin{proof}[Proof of Fact \ref{fact:key3}]
		We begin with asymptotically stable subsystems \(p\in\P_{S}\). Let \(\R^{d}\ni\xi\mapsto V_{p}(\xi):= <P_{p}\xi,\xi>\), where \(P_{p}\in\R^{d\times d}\) is a symmetric and positive definite solution to the Lyapunov equation
		\begin{align}
		\label{e:Lyap_eqn}
			A_{p}^\top P_{p} + P_{p}A_{p} = -Q_{p}
		\end{align}
		for some pre-selected symmetric and positive definite matrix \(Q_{p}\in\R^{d\times d}\) \cite[Corollary 11.9.1]{Bernstein}. Recall that \cite[Lemma 8.4.3]{Bernstein} any symmetric and positive definite matrix \(Z\in\R^{d\times d}\) satisfies
		\[
			\lambda_{\min}(Z)\norm{z}^{2} \leq z^\top Zz \leq \lambda_{\max}(Z)\norm{z}^{2}.
		\]
		Consequently, for all \(\xi\in\R^{d}\), we have
		\[
			\xi^\top Q_{p}\xi \geq -\frac{\lambda_{\max}(Q_{p})}{\lambda_{\min}(P_{p})}\xi^\top P_{p}\xi.
		\]
		Let \(\check{\lambda}_{p} = \frac{\lambda_{\max}(Q_{p})}{\lambda_{\min}(P_{p})}\). We have
		\[
			\frac{d}{dt}V_{p}(\gamma_{p}(t)) \geq -\check{\lambda}_{p} V_{p}(\gamma_{p}(t))
		\]
		leading to \eqref{e:key_ineq3} with \(\check{\lambda}_{p} > 0\).
		
		We now move on to unstable subsystems \(p\in\P_{U}\). There exist \(\varepsilon_{p} > 0\) such that \(A_{p} - \varepsilon_{p}I\) is asymptotically stable. Select the Lyapunov-like function \(\R^{d}\ni\xi\mapsto V_{p}(\xi) := <P_{p}\xi,\xi>\), where \(P_{p}\in\R^{d\times d}\) is a symmetric and positive definite solution to the Lyapunov equation \eqref{e:Lyap_eqn} with \(A_{p} = A_{p} - \varepsilon_{p}I\) and a pre-selected symmetric and positive definite matrix \(Q_{p}\in\R^{d\times d}\). Following the set of arguments for asymptotically stable subsystems, we have that
		\[
			\frac{d}{dt}V_{p}(\gamma_{p}(t)) \geq -\check{\lambda}_{p}V_{p}(\gamma_{p}(t)),
		\]
		where \(\displaystyle{\check{\lambda}_{p} = -\biggl(2\varepsilon_{p} - \frac{\lambda_{\max}(Q_{p})}{\lambda_{\min}(P_{p})}\biggr)}\). Notice that \(\varepsilon_{p}\) is any scalar strictly bigger than \(Re(\lambda_{\max}(A_{p}))\). One needs to choose \(\varepsilon_{p}\) and \(Q_{p}\) such that the term \(2\varepsilon_{p} - \frac{\lambda_{\max}(Q_{p})}{\lambda_{\min}(P_{p})} \geq 0\). Consequently, \eqref{e:key_ineq3} follows with \(\check{\lambda}_{p}\leq 0\).
	\end{proof}
	
	\begin{proof}[Proof of Proposition \ref{prop:mu_est}]
		Recall that by definition of \(V_{p}\), \(p\in\P\) in \eqref{e:Lyap_func-defn}, each \(P_{p}\), \(p\in\P\) is symmetric and positive definite. Hence, \(P_{p}^{-1}\), \(p\in\P\) exist. Also, \(P_{q}P_{p}^{-1}\) is similar to the matrix \(P_{p}^{-1/2}(P_{q}P_{p}^{-1})P_{p}^{1/2}\), and the latter is symmetric and positive definite. Since the spectrum of a matrix is invariant under similarity transformations, the eigenvalues of \(P_{q}P_{p}^{-1}\) are the same as the eigenvalues of \(P_{p}^{-1/2}P_{q}P_{p}^{-1/2}\), and consequently, the eigenvalues of \(P_{q}P_{p}^{-1}\) are real numbers.
		
		Now,
		\begin{align*}
			\inf_{\R^{d}\ni\xi\neq 0}\frac{V_{q}(\xi)}{V_{p}(\xi)} = \inf_{\R^{d}\ni\xi\neq 0}\frac{<P_{q}\xi,\xi>}{<P_{p}\xi,\xi>}
			= \inf_{\R^{d}\ni\xi\neq 0}\frac{<P_{q}\xi,\xi>}{<P_{p}^{1/2}\xi,P_{p}^{1/2}\xi>}.
		\end{align*}
		Let \(y = P_{p}^{1/2}\xi\). Then the right-hand side of the above equality is same as
		\begin{align*}
			\inf_{\R^{d}\ni y\neq 0}\frac{<P_{q}(P_{p}^{-1/2}y),P_{p}^{-1/2}y>}{<y,y>} &= \inf_{\R^{d}\ni y \neq 0}\frac{<P_{p}^{-1/2}P_{q}P_{p}^{-1/2}y,y>}{<y,y>}\\
			&= \lambda_{\min}(P_{p}^{-1/2}P_{q}P_{p}^{-1/2})\\
			&= \lambda_{\min}(P_{q}P_{p}^{-1}).
		\end{align*}
		Since \(V_{q}(\xi)\geq\check{\mu}_{pq}V_{p}(\xi)\) for all \(\xi\in\R^{d}\), the constant \(\check{\mu}_{pq}\) satisfies \eqref{e:mu_est}.
	\end{proof}
\section{Proofs of Theorem \ref{t:instab_thm} and Proposition \ref{prop:gap}}
\label{s:all_proofs2}
	
	\begin{proof}[Proof of Theorem \ref{t:instab_thm}]
		Recall that \(0=:\tau_{0}<\tau_{1}<\cdots<\tau_{\Ntsigma(t)}\) are the switching instants before (and including) \(t > 0\).
		
		In view of \eqref{e:key_ineq1}, we have
			\begin{align}
			\label{e:pf1_step1}
				V_{\sigma(t)}(x(t)) \geq \exp\bigl(-\check{\lambda}_{\sigma(\tau_{\Ntsigma(t)})}(t-\tau_{\Ntsigma(t)})\bigr)V_{\sigma(t)}(x(\tau_{\Ntsigma(t)})).
			\end{align}
		A straightforward iteration of \eqref{e:pf1_step1} applying \eqref{e:key_ineq3} and \eqref{e:key_ineq4}, we obtain
			\begin{align}
			\label{e:pf1_step2}
				V_{\sigma(t)}(x(t)) \geq \exp\Biggl(&-\sum_{\substack{i=0\\\tau_{\Ntsigma(t)+1}:=t}}^{\Ntsigma(t)}\check{\lambda}_{\sigma(\tau_{i})}(\tau_{i+1}-\tau_{i})\times\prod_{i=0}^{\Ntsigma(t)-1}\check{\mu}_{\sigma(\tau_{i})\sigma(\tau_{i+1})}\Biggl)V_{\sigma(0)}(x(0)).
			\end{align}
			
		Now,
			\begin{align}
			\label{e:pf1_step3}
				\prod_{i=0}^{\Ntsigma(t)-1}\check{\mu}_{\sigma(\tau_{i})\sigma(\tau_{i+1})} = \exp\Biggl(\sum_{i=0}^{\Ntsigma(t)-1}\ln\check{\mu}_{\sigma(\tau_{i})\sigma(\tau_{i+1})}\Biggr)
				&=\exp\Biggl(\sum_{p\in\P}\sum_{i=0}^{\Ntsigma(t)-1}\sum_{\substack{(p,q)\in E(\P),\\\sigma(\tau_{i}) = p,\\\sigma(\tau_{i+1}) = q}}\ln\check{\mu}_{pq}\Biggr)\nonumber\\
				&=\exp\Biggl(\sum_{(p,q)\in E(\P)}\bigl(\ln\check{\mu}_{pq}\bigr)\Ntsigma_{pq}(t)\Biggr)\nonumber\\
				&=\exp\Biggl(\Ntsigma(t)\sum_{(p,q)\in E(\P)}\bigl(\ln\check{\mu}_{pq}\bigr)\rho_{pq}(t)\Biggr),
			\end{align}
			where \(\rho_{pq}(t)\), \((p,q)\in E(\P)\) is as defined in \eqref{e:trans_freq}.\\
			Also,
			\begin{align*}
				&\exp\Biggl(-\sum_{\substack{i=0\\\tau_{\Ntsigma(t)+1 : = t}}}^{\Ntsigma(t)}\check{\lambda}_{\sigma(\tau_{i})}(\tau_{i+1}-\tau_{i})\Biggr)= \exp\Biggl(-\sum_{i=0}^{\Ntsigma(t)-1}\Biggl(\sum_{p\in\P}\mathrm{1}(\sigma(\tau_{i}) = p)\check{\lambda}_{p}(\tau_{i+1}-\tau_{i})\Biggr)\Biggr).
			\end{align*}
			Separating out the asymptotically stable and unstable subsystems, we see that the right-hand side of the above expression is equal to
			\begin{align*}
				\exp\Biggl(-\Biggl(&\sum_{p\in\P_{S}}\check{\lambda}_{p}\sum_{\substack{i:\sigma(\tau_{i}) = p\\i=0,1,\ldots,\Ntsigma(t),\\\tau_{\Ntsigma(t)+1} := t}}(\tau_{i+1}-\tau_{i})+ \sum_{p\in\P_{U}}\check{\lambda}_{p}\sum_{\substack{i:\sigma(\tau_{i}) = p\\i=0,1,\ldots,\Ntsigma(t),\\\tau_{\Ntsigma(t)+1} := t}}(\tau_{i+1}-\tau_{i})\Biggr)\Biggr).
			\end{align*}
			Recall that \(\check{\lambda}_{p} > 0\) for \(p\in\P_{S}\) and \(\check{\lambda}_{p}\leq 0\) for \(p\in\P_{U}\). Consequently, the above expression is equal to
			\begin{align}
			\label{e:pf1_step4}
				&\exp\Biggl(-\sum_{p\in\P_{S}}\abs{\check{\lambda}_{p}}\sum_{\substack{i:\sigma(\tau_{i}) = p\\i=0,1,\ldots,\Ntsigma(t),\\\tau_{\Ntsigma(t)+1} := t}}(\tau_{i+1}-\tau_{i})+\sum_{p\in\P_{U}}\abs{\check{\lambda}_{p}}\sum_{\substack{i:\sigma(\tau_{i}) = p\\i=0,1,\ldots,\Ntsigma(t),\\\tau_{\Ntsigma(t)+1} := t}}(\tau_{i+1}-\tau_{i})\Biggr)\nonumber\\
				=&\exp\Biggl(-\sum_{p\in\P_{S}}\abs{\check{\lambda}_{p}}\T_{p}(t)+\sum_{p\in\P_{U}}\abs{\check{\lambda}_{p}}\T_{p}(t)\Biggr),
			\end{align}
			where \(\T_{p}(t)\), \(p\in\P\) is as defined in \S\ref{ss:stab_sw}.
			
			Substituting \eqref{e:pf1_step3} and \eqref{e:pf1_step4} into \eqref{e:pf1_step2}, we have
			\begin{align}
			\label{e:pf1_step5}
				V_{\sigma(t)}(x(t))\geq \exp\Bigl(\psi(t)\Bigr)V_{\sigma(0)}(x(0)),
			\end{align}
			where, for \(t > 0\), the function \(\psi\) is defined as
			\begin{align}
			\label{e:pf1_step6}
				\psi(t) := \Ntsigma(t)\sum_{(p,q)\in E(\P)}(\ln\check{\mu})\rho_{pq}(t) &-\sum_{p\in\P_{S}}\abs{\check{\lambda}_{p}}\T_{p}(t)
				+\sum_{p\in\P_{U}}\abs{\check{\lambda}_{p}}\T_{p}(t).
			\end{align}
			
			For \(t > 0\), the right-hand side above is equal to
			\begin{align}
			\label{e:pf1_step7}
				t\Biggl(\frac{\Ntsigma(t)}{t}\sum_{(p,q)\in E(\P)}(\ln\check{\mu}_{pq})\rho_{pq}(t) &- \sum_{p\in\P_{S}}\abs{\check{\lambda}_{p}}\frac{\T_{p}}{t}
				+\sum_{p\in\P_{U}}\abs{\check{\lambda}_{p}}\frac{\T_{p}(t)}{t}\Biggr)\nonumber\\
				t\Biggl(\nu(t)\sum_{(p,q)\in E(\P)}(\ln\check{\mu}_{pq})\rho_{pq}(t) &- \sum_{p\in\P_{S}}\abs{\check{\lambda}_{p}}\eta_{p}(t)
				+\sum_{p\in\P_{U}}\abs{\check{\lambda}_{p}}\eta_{p}(t)\Biggr),
			\end{align}
			where \(\nu(t)\) and \(\eta_{p}(t)\), \(p\in\P\) are as defined in \eqref{e:sw_freq} and \eqref{e:frac_activ}, respectively.
			
			From the definition of \(V_{p}\), \(p\in\P\) in \eqref{e:Lyap_func-defn}, we have
			\begin{align}
			\label{e:Lyap_key}
				\underline{\alpha}(\norm{\xi})\leq V_{p}(\xi) \leq\overline{\alpha}(\norm{\xi})\:\:\text{for all}\:\xi\in\R^{d},
			\end{align}
			where \(\displaystyle{\underline{\alpha}(r) := \min_{p\in\P}\Bigl(\lambda_{\min}(P_{p})\Bigr)r^{2}}\) and \(\displaystyle{\overline{\alpha}(r) := \max_{p\in\P}\Bigl(\lambda_{\min}(P_{p})\Bigr)r^{2}}\). It follows that
			\begin{align}
			\label{e:pf1_step8}
				\overline{\alpha}(\norm{x(t)})\geq\exp(\psi(t))\underline{\alpha}(\norm{x(0)}).
			\end{align}
			
			Armed with \eqref{e:pf1_step8}, to verify instability of the switched linear system \eqref{e:swsys} (according to Definition \ref{d:uns}), we find conditions such that
			\begin{align}
			\label{e:pf1_step9}
				\lim_{t\to+\infty}\exp(\psi(t)) = +\infty.
			\end{align}
			The following condition is sufficient to guarantee \eqref{e:pf1_step9}.
			\begin{align*}
				\varliminf\Biggl(\nu(t)\sum_{(p,q)\in E(\P)}(\ln\check{\mu}_{pq})\rho_{pq}(t) &- \sum_{p\in\P_{S}}\abs{\check{\lambda}_{p}}\eta_{p}(t)
				+\sum_{p\in\P_{U}}\abs{\check{\lambda}_{p}}\eta_{p}(t)\Biggr) > 0.
			\end{align*}
			This completes our proof of Theorem \ref{t:instab_thm}.
	\end{proof}
	
	\begin{remark}
	\label{rem:stab-instab_diff}
	\rm{
		From Facts \ref{fact:key1} and \ref{fact:key2}, we obtain
		\[
			V_{\sigma(t)}(x(t))\leq\exp(\phi(t))V_{\sigma(0)}(x(0)),
		\]
		where the function \(\phi(t)\) involves the scalars \(\hat{\lambda}_{p}\), \(p\in\P\) and \(\hat{\mu}_{pq}\), \((p,q)\in E(\P)\). In view of \eqref{e:Lyap_key}, we obtain
		\[
			\underline{\alpha}(\norm{x(t)})\leq\exp(\phi(t))(\norm{x(0)}).
		\]
		The function \(\phi\) is analyzed for Lyapunov stability and global asymptotic convergence of the switched linear system \eqref{e:swsys}. In contrast we analyze the function \(\psi\) for asymptotic divergence of \(\norm{x(t)}\).
	}
	\end{remark}
	
	\begin{proof}[Proof of Proposition \ref{prop:gap}]
		Consider the setting of Example \ref{ex:counter_ex1}. We have \(A_{1} = \pmat{-0.2 & -0.4\\3 & -0.2}\) and \(A_{2} = \pmat{-0.2 & -3\\0.4 & -0.2}\).
		
		Select \(V_{p}(\xi) = <P_{p}\xi,\xi>\), \(p\in\P\), where \(P_{1} = \pmat{10.3629 & 0.5242\\0.5242 & 1.4516}\), \(P_{2} = \pmat{1.4516 & -0.5242\\-0.5242 & 10.3629}\) are obtained by solving the Lyapunov equation \(A_{p}^\top P_{p}+P_{p}A_{p} = -Q_{p}\) with \(Q_{p} = I_{d\times d}\), \(p\in\P\).
		
		Let a switching signal \(\sigma\) satisfy
		\begin{itemize}[label = \(\circ\), leftmargin = *]
			\item \(\Ntsigma(t) = \lfloor\frac{t}{10}\rfloor\),
			\item \(\Ntsigma_{pq}(t) = \frac{\Ntsigma(t)}{2}\), \((p,q)\in E(\P)\),
			\item \(\T_{p}(t) = \frac{t}{2}\), \(p\in\P\).
		\end{itemize}
		Clearly,
		\begin{itemize}[label = \(\circ\), leftmargin = *]
			\item \(\nu(t) = \frac{1}{10}\),
			\item \(\rho_{pq}(t) = \frac{1}{2}\), \((p,q)\in E(\P)\),
			\item \(\eta_{p}(t) = \frac{1}{2}\), \(p\in\P\).
		\end{itemize}
		
		In Example \ref{ex:counter_ex1} we showed that the above \(\sigma\) does not satisfy condition \eqref{e:stab_condn}, and hence \(\sigma\) is an element of \(\Sss^{c}\). We will now show that \(\sigma\) is not an element of \(\Ssi\) either.
		
		Corresponding to the chosen pairs \((P_{1},Q_{1})\) and \((P_{2},Q_{2})\), we obtain the following estimates of the scalars \(\check{\lambda}_{p}\), \(p\in\P\) and \(\check{\mu}_{pq}\), \((p,q)\in E(\P)\):\footnote{We employ \(\check{\lambda}_{p} = \frac{\lambda_{\max}(Q_{p})}{\lambda_{\min}(P_{p})}\), \(p\in\P\) and \(\check{\mu}_{pq} = \lambda_{\min}(P_{q}P_{p}^{-1})\), \((p,q)\in E(\P)\).}
		\[
			\check{\lambda}_{1} = \check{\lambda}_{2} = 0.7038,\quad\check{\mu}_{12} = \check{\mu}_{21} = 0.1367.
		\]
		Now,
		\begin{align*}
			\varliminf_{t\to+\infty}\Biggl(\nu(t)&\sum_{(p,q)\in E(\P)}(\ln\check{\mu}_{pq})\rho_{pq}(t) - \sum_{p\in\P_{S}}\abs{\check{\lambda}_{p}}\eta_{p}(t)
			+\sum_{p\in\P_{U}}\abs{\check{\lambda}_{p}}\eta_{p}(t)\Biggr) = -0.9028 < 0,
		\end{align*}
		and hence \(\sigma\notin\Ssi\).
	\end{proof}


\end{document}